\documentclass[a4paper,UKenglish,cleveref, autoref, thm-restate]{lipics-v2021}

\pdfoutput=1 \hideLIPIcs  

\graphicspath{{./figures/}}

\bibliographystyle{plainurl}

\title{Parameterized Complexity of Vertex Splitting to Pathwidth at most 1}

\author{Jakob {Baumann}}{University of Passau, Germany}{baumannjak@fim.uni-passau.de}{https://orcid.org/0000-0002-2594-3828}{}

\author{Matthias {Pfretzschner}}{University of Passau, Germany}{pfretzschner@fim.uni-passau.de}{https://orcid.org/0000-0002-5378-1694}{}

\author{Ignaz {Rutter}}{University of Passau, Germany}{rutter@fim.uni-passau.de}{https://orcid.org/0000-0002-3794-4406}{}

\authorrunning{J. Baumann and M. Pfretzschner and I. Rutter} 

\Copyright{Jakob Baumann and Matthias Pfretzschner and Ignaz Rutter} 

\ccsdesc[500]{Theory of computation~Parameterized complexity and exact algorithms}
\ccsdesc[500]{Mathematics of computing~Graph algorithms}

\keywords{Vertex Splitting, Vertex Explosion, Pathwidth 1, FPT, Courcelle's Theorem} 

\category{} 

\relatedversion{}

\nolinenumbers 

\EventEditors{John Q. Open and Joan R. Access}
\EventNoEds{2}
\EventLongTitle{42nd Conference on Very Important Topics (CVIT 2016)}
\EventShortTitle{CVIT 2016}
\EventAcronym{CVIT}
\EventYear{2016}
\EventDate{December 24--27, 2016}
\EventLocation{Little Whinging, United Kingdom}
\EventLogo{}
\SeriesVolume{42}
\ArticleNo{23}

\usepackage{xspace}
\usepackage{tabularx}
\usepackage{environ}
\usepackage{hyperref}
\usepackage{todonotes}
\usepackage{algorithm2e}
\usepackage{caption,subcaption}
\usepackage{mleftright}
\usepackage{mathtools}

\makeatletter
\newcommand{\problemtitle}[1]{\gdef\@problemtitle{#1}}
\newcommand{\probleminput}[1]{\gdef\@probleminput{#1}}
\newcommand{\problemquestion}[1]{\gdef\@problemquestion{#1}}
\NewEnviron{problem}{
  \problemtitle{}\probleminput{}\problemquestion{}
  \BODY
  \par\addvspace{.5\baselineskip}
  \noindent
  \begin{tabularx}{\textwidth}{@{\hspace{15pt}} l X c}
    \multicolumn{2}{@{\hspace{25pt}}l}{\textsc{\@problemtitle}} \\
    \textbf{Input:} & \@probleminput \\
    \textbf{Question:} & \@problemquestion
  \end{tabularx}
  \par\addvspace{.5\baselineskip}
}
\makeatother

\newcommand{\POVE}{\textsc{Pathwidth-One Vertex Explosion}\xspace}
\newcommand{\pove}{\textsc{POVE}\xspace}
\newcommand{\POVS}{\textsc{Pathwidth-One Vertex Splitting}\xspace}
\newcommand{\povs}{\textsc{POVS}\xspace}
\newcommand{\TOVS}{\textsc{Treewidth-One Vertex Splitting}\xspace}
\newcommand{\tovs}{\textsc{TOVS}\xspace}
\newcommand{\SPLITPI}{\textsc{$\Pi$ Vertex Splitting}\xspace}
\newcommand{\splitPi}{\textsc{$\Pi$-VS}\xspace}
\newcommand{\EXPLODEPI}{\textsc{$\Pi$ Vertex Explosion}\xspace}
\newcommand{\explodePi}{\textsc{$\Pi$-VE}\xspace}

\newcommand{\degs}{\ensuremath{\mathrm{deg^*}}}
\newcommand{\Sxpl}{\ensuremath{S_{\mathrm{explode}}}}
\newcommand{\Skeep}{\ensuremath{S_{\mathrm{keep}}}}

\newcommand{\PiExpl}{\ensuremath{\Pi_k^\times}}

\DeclareMathOperator{\adj}{adj}

\makeatletter
\newcommand{\mylabel}[1]{
  \let\oldlabel\@currentlabel
  \def\@currentlabel{(\oldlabel)}
  \label{#1}
}
\makeatother
\newtheorem{redrule}{Reduction Rule}
\newtheorem{branchrule}{Branching Rule}
\newtheorem{mytheorem}{Theorem}
\newtheorem{myproposition}{Proposition}

\newtheorem{mylemma}{Lemma}
\newtheorem{mycorollary}{Corollary}

\begin{document}
  
  \maketitle
  
  \begin{abstract}
     Motivated by the planarization of 2-layered straight-line drawings, we consider the problem of modifying a graph 
    such that the resulting graph has pathwidth at most 1. The problem \POVE (\pove) asks whether such a graph can be 
    obtained using at most $k$ vertex explosions, where a \emph{vertex explosion} replaces a vertex $v$ by $\deg(v)$ 
    degree-1 vertices, each incident to exactly one edge that was originally incident to~$v$. For \pove, we give an FPT 
    algorithm with running time $O(4^k \cdot m)$ and an $O(k^2)$ kernel, thereby improving over the $O(k^6)$-kernel by
    Ahmed et al.~\cite{Ahmed22} in a more general setting.
    Similarly, a \emph{vertex split} replaces a vertex~$v$ by two distinct vertices~$v_1$ and~$v_2$ and distributes the 
    edges originally incident to~$v$ arbitrarily to~$v_1$ and~$v_2$. Analogously to \pove, we define the problem 
    variant \POVS (\povs) that uses the split operation instead of vertex explosions. Here we obtain a linear kernel 
    and an algorithm with running time $O((6k+12)^k \cdot m)$.   This answers an open question by Ahmed et al.~\cite{Ahmed22}.
    
    Finally, we consider the problem \SPLITPI (\splitPi), which generalizes the problem \povs and asks whether a given graph can be turned into a graph of a specific graph class $\Pi$ using at most $k$ vertex splits. For graph classes $\Pi$ that can be tested in monadic second-order graph logic~(MSO$_2$), we show that the problem \splitPi can be expressed as an MSO$_2$ formula, resulting in an FPT algorithm for \splitPi parameterized by $k$ if $\Pi$ additionally has bounded treewidth.
    We obtain the same result for the problem variant using vertex explosions.
  \end{abstract}

  \section{Introduction}

  Crossings are one of the main aspects that negatively affect the readability of drawings~\cite{Purchase95}.  It is 
  therefore natural to try and modify a given graph in such a way that it can be drawn without crossings while 
  preserving as much of the information as possible.  We consider three different operations.

  A \emph{deletion operation} simply removes a vertex from the graph.  A \emph{vertex explosion} replaces a vertex~$v$ by~$\deg(v)$ degree-1 vertices, each incident to exactly one edge that was originally incident to~$v$.  Finally, a \emph{vertex split} replaces a vertex~$v$ by two distinct vertices~$v_1$ and~$v_2$ and distributes the edges originally incident to~$v$ arbitrarily to~$v_1$ and~$v_2$.

  Nöllenburg et al.~\cite{Nollenburg22} have recently studied the vertex splitting problem, which is known to be 
  NP-complete~\cite{Faria98}.  In particular, they gave a non-uniform FPT-algorithm for deciding whether a given graph 
  can be planarized with at most $k$ splits.
  
  We observe that, since degree-1 vertices can always be inserted into a planar drawing, the vertex explosion model and
  the vertex deletion model are equivalent for obtaining planar graphs. Note that this is not necessarily the case for
  other target graph classes (see, for example, Figure~\ref{fig:operations}). The problem of deleting vertices to
  obtain a planar graph is also known as \textsc{Vertex Planarization} and has been studied extensively in the
  literature. While the problem is NP-complete~\cite{Lewis80}, it follows from results of Robertson and
  Seymour~\cite{Robertson95} that the problem can be decided in cubic time for any fixed $k$. Subsequent algorithms
  gradually improved upon this result~\cite{Marx07, Kawarabayashi09}, culminating in an $O(2^{O(k \log k)}
  \cdot n)$-time algorithm introduced by Jansen et al.~\cite{Jansen14}.

  Ahmed et al.~\cite{Ahmed22} investigated the problem of splitting the vertices of a bipartite graph so 
  that it admits a 2-layered drawing without crossings.  They assume that the input graph is bipartite and only the 
  vertices of one of the two sets in the bipartition may be split.  Under this condition, they give an $O(k^6)$-kernel 
  for the vertex explosion model, which results in an $O(2^{O(k^6)}m)$-time algorithm.  They ask whether similar 
  results can be obtained in the vertex splitting model. Figure~\ref{fig:operations} illustrates the three operations 
  in the context of 2-layered drawings\footnote{In this context, minimizing the number of vertex explosions is equivalent to minimizing the number of vertices that are split, since it is always best to split a vertex as often as possible.}.
  
    \begin{figure}[t]
    \centering
    \begin{subfigure}[b]{0.3\textwidth}
      \centering
      \includegraphics[page=1,scale=1]{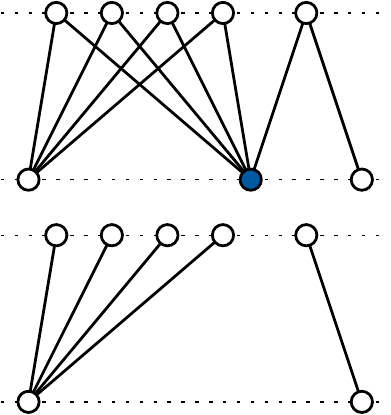}
      \caption{}
      \label{fig:twoLayerDeletion}
    \end{subfigure}
    \hfill
    \begin{subfigure}[b]{0.3\textwidth}
      \centering
      \includegraphics[page=2,scale=1]{OperationComparison}
      \caption{}
      \label{fig:twoLayerExplosion}
    \end{subfigure}
    \hfill
    \begin{subfigure}[b]{0.3\textwidth}
      \centering
      \includegraphics[page=3,scale=1]{OperationComparison}
      \caption{}
      \label{fig:twoLayerSplit}
    \end{subfigure}
    \caption{Given the shown bipartite graph, a crossing-free 2-layered drawing can be obtained using one vertex 
    deletion (a), two vertex explosions (b), or three vertex splits (c).}
    \label{fig:operations}
  \end{figure}

  We note that a graph admits a 2-layer drawing without crossings if and only if it has pathwidth at most~$1$, i.e., it is a disjoint union of caterpillars~\cite{Arnborg90,Eades86}.  Motivated by this, we more generally 
  consider the problem of turning a graph~$G=(V,E)$ into a graph of pathwidth at most~1 by the above operations.  In 
  order to model the restriction of Ahmed et al.~\cite{Ahmed22} that only one side of their bipartite input graph may 
  be split, we further assume that we are given a subset~$S \subseteq V$, to which we may apply modification operations 
  as part of the input. We define that the new vertices resulting from an operation are also included in $S$.

  More formally, we consider the following problems, all of which have been shown to be NP-hard 
  \cite{Ahmed23,Philip2010}.

  \begin{problem}
    \problemtitle{\POVE (\pove)}
    \probleminput{An undirected graph $G = (V, E)$, a set $S \subseteq V$, and a positive integer~$k$.}
    \problemquestion{Is there a set~$W \subseteq S$ with~$|W| \le k$ such that the graph resulting from exploding all vertices in~$W$ has pathwidth at most~1?}
  \end{problem}

   \begin{problem}
    \problemtitle{Pathwidth-One Vertex Splitting (POVS)}
    \probleminput{An undirected graph $G = (V, E)$, a set $S \subseteq V$, and a positive integer~$k$.}
    \problemquestion{Is there a sequence of at most $k$ splits on vertices in $S$ such that the resulting
      graph has pathwidth at most 1?}
  \end{problem}

  We refer to the analogous problem with the deletion operation as \textsc{Pathwidth-One Vertex Deletion} 
  (\textsc{POVD}).  Here an algorithm with running time $O(7^k \cdot n^{O(1)})$ is known~\cite{Philip2010}, which 
  was later improved to $O(4.65^k \cdot n^{O(1)})$~\cite{Cygan12}, and very recently to $O(3.888^k \cdot 
  n^{O(1)})$~\cite{Tsur22}. Philip et al.~\cite{Philip2010} also gave a quartic kernel for \textsc{POVD}, which Cygan 
  et al.~\cite{Cygan12} later improved to quadratic.  Our results are as follows.

  First, in Section~\ref{sec:pove}, we show that \pove admits a kernel of
  size~$O(k^2)$ and an algorithm with running time $O(4^km)$, thereby
  improving over the results of Ahmed et al.~\cite{Ahmed22} in a more general
  setting.

  Second, in Section~\ref{sec:povs}, we show that \povs has a kernel of size~$16k$ and it admits an algorithm with 
  running time $O((6k+12)^k \cdot m)$.  This answers the open question of Ahmed et al.~\cite{Ahmed22}.
  
  In Section~\ref{sec:tovs}, we consider analogous problem variants where the target is to obtain a graph of treewidth at most~1, rather than pathwidth at most~1. 
  Here we show that the deletion model and the explosion model are both equivalent to the problem \textsc{Feedback
  Vertex Set}, and that the split model is equivalent to \textsc{Feedback Edge Set} and can thus be solved in linear
  time. For the latter, Firbas~\cite{Firbas2023} independently obtained the same result.
  
  Finally, in Section~\ref{sec:minor-closed}, we consider the problem \SPLITPI (\splitPi), the generalized version of the splitting problem where the goal is to obtain a graph of a specific graph class $\Pi$ using at most $k$ split operations. 
  Eppstein et al.~\cite{Eppstein18} recently studied the similar problem of deciding whether a given graph $G$ is \emph{$k$-splittable}, i.e., whether it can be turned into a graph of $\Pi$ by splitting every vertex of $G$ at most $k$ times.
  For graph classes $\Pi$ that can be expressed in monadic second-order graph logic (MSO$_2$, see~\cite{Cygan15}), they gave an FPT algorithm parameterized by the solution size $k$ and the treewidth of the input graph.
  We adapt their algorithm for the problem \splitPi, resulting in an FPT algorithm parameterized by the solution size $k$ for MSO$_2$-definable graph classes $\Pi$ of bounded treewidth.
  Using a similar algorithm, we obtain the same result for the problem variant using vertex explosions.
   
  \section{Preliminaries}

  A parameterized problem $L$ with parameter $k$ is \emph{non-uniformly fixed-parameter tractable} if, for every value of $k$, there exists an algorithm that decides $L$ in time $f(k) \cdot n^{O(1)}$ for some computable function~$f$. 
  If there is a single algorithm that satisfies this property for all values of $k$, then $L$ is \emph{(uniformly) fixed-parameter tractable}.
  
  Given a graph $G$, we let $n$ and $m$ denote the number of vertices and edges of $G$, respectively. Since we can
  determine the subgraph of $G$ that contains no isolated vertices in $O(m)$ time, we assume, without loss of
  generality, that $n \in O(m)$. For a vertex $v \in V(G)$, we let $N(v) \coloneqq \{u \in V(G) \mid \adj(v, u)\}$
  and~$N[v] \coloneqq N(v) \cup \{v\}$ denote the open and closed neighborhood of $v$ in $G$, respectively.
  
  We refer to vertices of degree 1 as \emph{pendant} vertices.  For a vertex $v$ of $G$, we let ${\degs(v) \coloneqq 
  |\{u \in N(v) \mid \deg(u) > 1\}|}$ denote the degree of~$v$ ignoring its pendant neighbors. If $\degs(v) = d$, we 
  refer to $v$ as a vertex of \emph{degree*}~$d$. 
  A graph is a \emph{caterpillar} (respectively a 
  \emph{pseudo-caterpillar}), if it consists of a simple path (a simple cycle) with an 
  arbitrary number of adjacent pendant vertices. The path (the cycle) is called the \emph{spine} of the 
  \mbox{(pseudo-)}caterpillar.
  
  \begin{figure}[t]
    \centering
    \begin{subfigure}[b]{0.45\textwidth}
      \centering
      \includegraphics[page=1,scale=0.7]{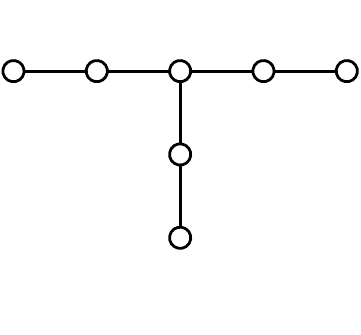}
      \caption{}
      \label{fig:subgraphsA}
    \end{subfigure}
    \hfill
    \begin{subfigure}[b]{0.45\textwidth}
      \centering
      \includegraphics[page=2,scale=0.7]{subgraphs}
      \caption{}
      \label{fig:subgraphsB}
    \end{subfigure}
    \caption{(a) The graph $T_2$. (b) Two graphs that do not contain $T_2$ as a subgraph, but both contain $N_2$ 
      (marked in orange) as a substructure.}
    \label{fig:subgraphs}
  \end{figure}
  
  Philip et al.~\cite{Philip2010} mainly characterized the graphs of pathwidth at most 1 as the graphs containing no 
  cycles and no $T_2$ (three simple paths of length~2 that all share an endpoint; see Figure~\ref{fig:subgraphsA}) as a subgraph. We additionally use slightly different sets of 
  forbidden substructures. An \emph{$N_2$ substructure} consists of a \emph{root} vertex $r$ adjacent to three distinct vertices of 
  degree at least~2. Note that every $T_2$ contains an $N_2$ substructure, 
  however, the existence of an $N_2$ substructure does not generally imply the existence of a $T_2$ subgraph; see 
  Figure~\ref{fig:subgraphsB}. In the following proposition, we state 
  the different characterizations for graphs of pathwidth at most 1 that we use in this work.
  \begin{myproposition}
    \label{prop:caterpillar}
    For a graph $G$, the following statements are equivalent.
    \begin{alphaenumerate}
      \item\mylabel{it:A} $G$ has pathwidth at most 1
      \item\mylabel{it:B} every connected component of $G$ is a caterpillar
      \item\mylabel{it:C} $G$ is acyclic and contains no $T_2$ subgraph
      \item\mylabel{it:D} $G$ is acyclic and contains no $N_2$ substructure
      \item\mylabel{it:E} $G$ contains no $N_2$ substructure and no connected component that is a pseudo-caterpillar.
    \end{alphaenumerate}
  \end{myproposition}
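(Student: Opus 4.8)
The plan is to prove the equivalences along the cycle (A) $\Leftrightarrow$ (B) $\Rightarrow$ (D) $\Rightarrow$ (C) $\Rightarrow$ (B), which settles the equivalence of (A)--(D), and then to close with (D) $\Rightarrow$ (E) $\Rightarrow$ (D). The equivalence of (A) and (B) is the classical characterization of pathwidth-$1$ graphs as disjoint unions of caterpillars and can be taken from~\cite{Arnborg90, Eades86}. For (B) $\Rightarrow$ (D), I would observe that a disjoint union of caterpillars is acyclic, and that in a caterpillar every vertex is either a leaf or a spine vertex with at most two non-leaf neighbors; hence no vertex can have three neighbors of degree at least~$2$, so $G$ contains no $N_2$ substructure. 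The implication (D) $\Rightarrow$ (C) is immediate from the observation already recorded above that every $T_2$ contains an $N_2$ substructure, so forbidding $N_2$ forbids $T_2$ as a subgraph.

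The substantive implication among the first four is (C) $\Rightarrow$ (B). Given a connected component $T$ of the forest $G$, I would consider the subgraph $T'$ induced by the vertices of degree at least~$2$ in $T$; since on the unique $T$-path between any two such vertices every internal vertex again has degree at least~$2$, the subgraph $T'$ is connected, hence a subtree. If $T'$ is a path (or empty), then $T$ is a caterpillar. Otherwise some vertex $v$ has three neighbors $a,b,c$ in $T'$, and each of $a,b,c$ has a further neighbor in $T$ besides $v$; acyclicity forces these three further neighbors to be pairwise distinct and distinct from $v,a,b,c$, producing a $T_2$ centered at $v$ and contradicting (C).

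It remains to tie in (E). The direction (D) $\Rightarrow$ (E) is trivial: a pseudo-caterpillar contains a cycle, so an acyclic graph has no pseudo-caterpillar component, and "no $N_2$" carries over verbatim. For (E) $\Rightarrow$ (D) it suffices, given that $G$ has no $N_2$ substructure, to show that $G$ is acyclic. Suppose $Z$ is a cycle in $G$ and $C$ its component; I claim $C$ is a pseudo-caterpillar with spine $Z$, contradicting (E). Every vertex of $Z$ has two neighbors on $Z$, each of degree at least~$2$, so to avoid an $N_2$ substructure every other neighbor of a cycle vertex must have degree~$1$; in particular $Z$ has no chords and every off-$Z$ neighbor of a cycle vertex is pendant. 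Finally, if some vertex off $Z$ had degree at least~$2$, a shortest path from it to $Z$ would end at an off-$Z$ neighbor of a cycle vertex that nonetheless has degree at least~$2$, a contradiction; hence all vertices off $Z$ are pendants attached to $Z$, and $C$ is a pseudo-caterpillar. I expect the structural analysis in (E) $\Rightarrow$ (D) --- simultaneously ruling out chords, ruling out non-pendant attachments, and collapsing the component to a cycle with pendants --- to be the main point requiring care, while the remaining implications consist of short, local arguments.
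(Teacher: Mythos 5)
Your proof is correct, but it is organized differently from the paper's. The paper cites Philip et al.~for the full equivalence (a)$\iff$(b)$\iff$(c) and then only proves (c)$\iff$(d) and (d)$\iff$(e); in particular, its (c)$\Rightarrow$(d) step is a direct local argument (if $r$ has three neighbors $x,y,z$ of degree at least $2$, then acyclicity forces $N[a]\cap N[b]=\{r\}$ for $a,b\in\{x,y,z\}$, so the outside neighbors of $x,y,z$ are distinct and yield a $T_2$). You instead cite only (a)$\iff$(b) and close the cycle (b)$\Rightarrow$(d)$\Rightarrow$(c)$\Rightarrow$(b), which requires you to reprove the structural step (c)$\Rightarrow$(b) --- that an acyclic, $T_2$-free component is a caterpillar, via the subtree $T'$ of degree-$\ge 2$ vertices --- something the paper simply inherits from the literature. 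Your route is more self-contained and avoids the $N[a]\cap N[b]$ argument entirely, at the cost of redoing a known characterization; logically both are complete. For (e)$\Rightarrow$(d) you and the paper use the same idea (a cycle in an $N_2$-free graph forces its component to be a pseudo-caterpillar), but your version is more careful: the paper asserts in one sentence that all off-cycle vertices have degree $1$, whereas you explicitly rule out chords and then use a shortest-path argument to exclude non-pendant vertices at distance $\ge 1$ from the cycle, which is exactly the justification the paper's terse sentence implicitly relies on.
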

  \begin{proof}
    For the equivalences~\ref{it:A}$\iff$\ref{it:B}$\iff$\ref{it:C}, we refer to the paper by Philip et    
    al.~\cite{Philip2010}.
    
    We now show the equivalence~\ref{it:C}$\iff$\ref{it:D}. Since any $T_2$ subgraph also contains an $N_2$ 
    substructure, the implication~\ref{it:D} $\Rightarrow$~\ref{it:C} is clear. Consider a graph $G$ 
    that does not contain a cycle or a $T_2$ subgraph. Assume that $G$ contains an $N_2$ substructure, 
    i.e., a vertex $r$ with three neighbors $x$, $y$, and $z$ of degree at least~2. Let $a, b \in \{x,y,z\}$. Note that 
    $r \in N[a] \cap N[b]$. If $(N[a] \cap N[b]) \setminus \{r\} \neq \emptyset$, then $N[a] \cap N[b]$ contains a 
    cycle, a contradiction. Thus $N[a] \cap N[b] = \{r\}$, i.e., $x$, $y$, and $z$ are each adjacent to distinct 
    vertices of $V(G) \setminus \{r,x,y,z\}$. But then these vertices form a $T_2$ subgraph, a contradiction. Thus $G$ 
    contains no $N_2$ substructure.
    
    Finally, we show the equivalence~\ref{it:D}$\iff$\ref{it:E}. Since a pseudo-caterpillar contains a cycle as its 
    spine, the direction~\ref{it:D} $\Rightarrow$~\ref{it:E} is clear. Let $G$ be a graph containing no $N_2$ 
    substructures 
    or connected components that are a pseudo-caterpillar. Assume that $G$ contains a cycle $C$ and let $H$ denote the
    connected component containing $C$. Since $H$ contains no $N_2$ substructure and since every vertex of $C$ has two 
    other neighbors contained in $C$, all other vertices of $H$ must have degree 1 and are thus pendant vertices. 
    Therefore, $H$ is a pseudo-caterpillar, a contradiction. Thus $G$ contains no cycles and the implication~\ref{it:E} 
    $\Rightarrow$~\ref{it:D} follows.  
  \end{proof}

  We define the \emph{potential} of $v \in V(G)$ as $\mu(v) \coloneqq 
  \mathrm{max}(\degs(v) - 2, 0)$. The \emph{global potential} ${\mu(G) \coloneqq \sum_{v \in V(G)} \mu(v)}$ is defined 
  as the sum of the potentials of all vertices in $G$. Observe that $\mu(G) = 0$ if and only if $G$ contains no $N_2$ 
  substructure. The global potential thus indicates how far away we are from eliminating all $N_2$ substructures from 
  the instance.
  
  Recall that, for the problems \pove and \povs, the set $S \subseteq V(G)$ marks the vertices of $G$ that may be 
  chosen for the respective operations. We say that a set $W \subseteq S$ is a \emph{pathwidth-one explosion set} 
  (POES) of $G$, if the graph resulting from exploding all vertices in $W$ has pathwidth at most 1. Analogously, a 
  sequence of vertex splits on $S$ is a \emph{pathwidth-one split sequence} (POS-sequence), if the resulting graph has 
  pathwidth at most 1. We can alternatively describe a sequence of split operations as a \emph{split partition}, a function $\tau$ that maps 
  every vertex $v \in S$ to a partition of the edges incident to $v$. The number of splits corresponding to $\tau$ is 
  then defined by $|\tau| \coloneqq \sum_{v \in S} (|\tau(v)| - 1)$. We say that $|\tau|$ is the \emph{size} of $\tau$. 
  If $\tau$ corresponds to a POS-sequence, we refer to $\tau$ as a \emph{pathwidth-one split partition} (POS-partition).
  
  A graph class $\Pi$ is \emph{minor-closed} if, for every graph $G \in \Pi$ and for every minor $H$ of $G$, $H$ is also contained in $\Pi$.
  We say that a graph class $\Pi$ is \emph{MSO$_2$-definable}, if there exists an MSO$_2$ (monadic second-order graph logic, see~\cite{Cygan15}) formula $\varphi$ such that $G \models \varphi$ if and only if $G \in \Pi$.
  A graph class $\Pi$ has \emph{bounded treewidth} if there exists a constant $c \in \mathbb{N}$ such that every graph contained in $\Pi$ has treewidth at most~$c$. We let $\mathrm{tw}(\Pi)$ denote the minimum constant $c$ where this is the case.
  
  \section{FPT Algorithms for \POVE}
  \label{sec:pove}
  In this section, we first show that \pove can be solved in time $O(4^k \cdot m)$ using bounded search trees. 
  Subsequently, we develop a kernelization algorithm for \pove that yields a quadratic kernel in linear time.
  
   \subsection{Branching Algorithm}
  \label{sec:poveBranch}
  
  We start by giving a simple branching algorithm for \pove, similar to the algorithm by Philip et al.~\cite{Philip2010}
  for the deletion variant of the problem. For an $N_2$ substructure $X$, observe that exploding vertices not contained
  in $X$ cannot eliminate $X$, because the degrees of the vertices in $X$ remain the same due to the new degree-1
  vertices resulting from the explosion. To obtain a graph of pathwidth at most 1, it is therefore always necessary to
  explode one of the four vertices of every $N_2$ substructure by Proposition~\ref{prop:caterpillar}. Our branching rule
  thus first picks an arbitrary $N_2$ substructure from the instance and then branches on which of the four vertices of
  the $N_2$ substructure belongs to the POES. Recall that $S$ denotes the set of vertices of the input graph that can be
  exploded.
  \begin{branchrule}
    \label{br:pove}
    Let $r$ be the root of an $N_2$ substructure contained in $G$ and let $x$, $y$, and $z$ denote the three neighbors 
    of 
    $r$ in $N_2$. For every vertex $v \in \{r, x, y, z\} \cap S$, create a branch for the instance $(G', S \setminus 
    \{v\}, k - 1)$, where $G'$ is obtained from $G$ by exploding $v$.\\
    If $\{r, x, y, z\} \cap S = \emptyset$, reduce to a trivial no-instance instead.
  \end{branchrule}
  Note that an $N_2$ substructure can be found in $O(m)$ time by checking, for every vertex $v$ in~$G$, whether $v$ has 
  at least three neighbors of degree at least 2. Also note that vertex explosions do not increase the number of edges 
  of the graph. Since Branching Rule~\ref{br:pove} creates at most four new branches, each of which reduces 
  the parameter $k$ by 1, exhaustively applying the rule takes $O(4^k \cdot m)$ time. By 
  Proposition~\ref{prop:caterpillar}, it subsequently only remains to eliminate connected components that are a 
  pseudo-caterpillar. Since a pseudo-caterpillar can (only) be turned into a caterpillar by exploding a vertex of 
  its spine, the remaining instance can be solved in linear time.
  \begin{mytheorem}
    The problem \POVE can be solved in time $O(4^k \cdot m)$.
  \end{mytheorem}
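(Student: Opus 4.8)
The plan is to formalize the argument already sketched in the surrounding text, turning it into a clean proof that the branching algorithm, followed by a linear-time cleanup phase, correctly solves \POVE within the stated time bound. First I would establish correctness of Branching Rule~\ref{br:pove}: by Proposition~\ref{prop:caterpillar}, a graph has pathwidth at most~1 iff it is acyclic and contains no $N_2$ substructure. Since exploding a vertex $v$ only replaces $v$ by degree-1 vertices and never changes the degree of any vertex outside $\{v\}$ (each neighbor of $v$ keeps exactly one incident edge where it previously had one), no explosion of a vertex outside a fixed $N_2$ substructure $X$ can destroy $X$. Hence any POES must explode at least one of the four vertices $r,x,y,z$ of $X$. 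This shows the branching is exhaustive: if none of these four vertices lies in $S$, the instance is a no-instance, justifying the fallback to a trivial no-instance; otherwise every POES is captured in at least one branch, and conversely every branch produces a valid subinstance (the new degree-1 vertices are added to $S$ as specified, and $k$ decreases by one because one explosion was used).

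Next I would handle termination and the post-branching phase. Exhaustive application of Branching Rule~\ref{br:pove} produces a search tree of depth at most $k$ and branching factor at most~$4$, so at most $O(4^k)$ leaves. At each node we spend $O(m)$ time: finding an $N_2$ substructure amounts to scanning every vertex and checking whether it has at least three neighbors of degree at least~2, and performing an explosion does not increase the edge count, so $m$ never grows along any root-to-leaf path. When no $N_2$ substructure remains, the graph is $N_2$-free, so by the equivalence \ref{it:D}$\iff$\ref{it:E} of Proposition~\ref{prop:caterpillar} the only remaining obstruction to pathwidth~1 is a connected component that is a pseudo-caterpillar. Such a component contains exactly one cycle (its spine), and the only way an explosion can eliminate that cycle while keeping the component $N_2$-free is to explode a spine vertex; exploding any other vertex either leaves the cycle intact or is wasteful. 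So in the leaf instance I would identify all pseudo-caterpillar components, check that each contains a spine vertex in $S$, and decrement $k$ by the number of such components; the instance is a yes-instance iff $S$ meets every pseudo-caterpillar spine and the total number of explosions used (branching explosions plus one per pseudo-caterpillar) is at most the original $k$. This cleanup is a single pass over the components and runs in $O(m)$ time.

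The main subtlety to get right is the interaction between the two phases and the accounting of $k$: I must argue that an optimal solution can always be assumed to first explode, for each $N_2$ substructure, one of its four vertices, and then separately fix each leftover pseudo-caterpillar, without the two kinds of explosions interfering — in particular, that exploding a pseudo-caterpillar's spine vertex does not create a new $N_2$ substructure (it cannot, since it only lowers degrees), and that no single explosion can simultaneously be "charged" to an $N_2$ substructure and to a pseudo-caterpillar in a way that double-counts. Since explosions only decrease degrees and never merge vertices, the set of $N_2$ substructures and the set of cyclic components behave monotonically, so a simple exchange argument shows an optimal POES decomposes as claimed. Putting the pieces together: the total running time is $O(4^k) \cdot O(m) + O(m) = O(4^k \cdot m)$, which proves the theorem.

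\begin{proof}
    We run Branching Rule~\ref{br:pove} exhaustively. By Proposition~\ref{prop:caterpillar}, a graph has pathwidth at most~1 if and only if it is acyclic and contains no $N_2$ substructure. Let $X$ be an $N_2$ substructure with root $r$ and neighbors $x,y,z$. An explosion of a vertex $v \notin \{r,x,y,z\}$ replaces $v$ by degree-1 vertices and leaves the degree of every other vertex unchanged; in particular $r$ still has three neighbors of degree at least~2, so $X$ survives. Hence every pathwidth-one explosion set must explode at least one vertex of $\{r,x,y,z\}$, and if $\{r,x,y,z\}\cap S = \emptyset$ the instance is a no-instance, as the rule declares. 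Otherwise the four branches (at most $|\{r,x,y,z\}\cap S|\le 4$ of them) together cover every pathwidth-one explosion set, and each branch $(G', S\setminus\{v\}, k-1)$ is a valid subinstance, with $G'$ obtained by one explosion and the new degree-1 vertices added to $S$. Thus the rule is sound and complete, and exhaustive application yields a search tree of depth at most~$k$ and branching factor at most~$4$, hence at most $O(4^k)$ leaves. Finding an $N_2$ substructure takes $O(m)$ time by checking, for each vertex, whether it has at least three neighbors of degree at least~$2$, and since explosions do not increase the edge count, $m$ does not grow along any root-to-leaf path; so the branching phase runs in $O(4^k\cdot m)$ time.

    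At each leaf, $G$ contains no $N_2$ substructure, so by the equivalence \ref{it:D}$\iff$\ref{it:E} of Proposition~\ref{prop:caterpillar} the only possible obstruction is a connected component that is a pseudo-caterpillar. Such a component contains exactly one cycle, namely its spine, and since an explosion only lowers degrees (and therefore never creates a new $N_2$ substructure or a new cycle), the cycle can be broken only by exploding one of its spine vertices, and exploding any other vertex is wasteful. By an exchange argument, an optimal pathwidth-one explosion set can therefore be assumed to consist of the branching explosions together with exactly one spine explosion per pseudo-caterpillar component, and these charges are disjoint because the two kinds of explosions act on disjoint parts of the graph. Consequently the leaf instance is a yes-instance if and only if every pseudo-caterpillar component has a spine vertex in~$S$ and the remaining budget $k$ is at least the number of such components; this is checked in a single pass over the components in $O(m)$ time. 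The total running time is $O(4^k)\cdot O(m) + O(m) = O(4^k\cdot m)$.
\end{proof}
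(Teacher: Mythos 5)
Your proof is correct and follows essentially the same approach as the paper: branch on the four vertices of an arbitrary $N_2$ substructure (justified by the observation that explosions outside the substructure cannot destroy it), yielding a depth-$k$, branching-factor-$4$ search tree with $O(m)$ work per node, and then resolve the remaining pseudo-caterpillar components at the leaves in linear time. Your write-up merely fills in details (the exchange argument for the leaf phase, the explicit budget accounting) that the paper leaves implicit.
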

  
  \subsection{Quadratic Kernel}
  \label{sec:poveKernel}
  We now turn to our kernelization algorithm for \pove. In this section, we develop a kernel of quadratic size, which 
  can be computed in linear time.
  
  We adopt our first two reduction rules from the kernelization of the deletion variant by Philip et 
  al.~\cite{Philip2010} and show that these rules are also safe for the explosion variant. The 
  first rule reduces the number of pendant neighbors of each vertex to at most one; see Figure~\ref{fig:RR1}.
    
  \begin{figure}[t]
    \centering
    \begin{subfigure}[b]{0.2\textwidth}
      \centering
      \includegraphics[page=1]{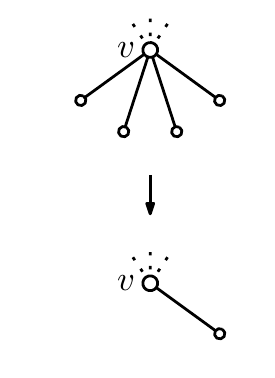}
      \caption{}
      \label{fig:RR1}
    \end{subfigure}
    \hfill
    \begin{subfigure}[b]{0.25\textwidth}
      \centering
      \includegraphics[page=2]{RR14}
      \caption{}
      \label{fig:RR2}
    \end{subfigure}
    \hfill
    \begin{subfigure}[b]{0.3\textwidth}
      \centering
      \includegraphics[page=3]{RR14}
      \caption{}
      \label{fig:RR3}
    \end{subfigure}
    \hfill
    \begin{subfigure}[b]{0.2\textwidth}
      \centering
      \includegraphics[page=4]{RR14}
      \caption{}
      \label{fig:RR4}
    \end{subfigure}
    \caption{ Examples for Reduction Rules~\ref{rr:pendant} (a), \ref{rr:caterpillar} (b), \ref{rr:pseudoCaterpillar} (c), and \ref{rr:nonAdj} (d).
      The vertices of $S$ are marked in green.}
  \end{figure}

  \begin{redrule}
    \label{rr:pendant}
    If $G$ contains a vertex $v$ with at least two pendant neighbors, remove all pendant neighbors of $v$ except one to
    obtain the graph $G'$ and reduce the instance to $(G',~{S \cap V(G')},~k)$.
  \end{redrule}
  \begin{proof}[Proof of Safeness]
    Observe that exploding a vertex of degree 1 has no effect, thus no minimum POES contains a vertex of degree 1. It 
    is therefore clear that any minimum POES of $G$ is also a POES of $G'$.
    
    Let $W$ denote a minimum POES of $G'$. It remains to show that $W$ is a POES of $G$. Let $l$ denote the 
    remaining pendant neighbor of $v$ in $G'$ and let $P \coloneqq V(G) \setminus V(G')$ denote the set of pendant 
    neighbors 
    of $v$ the reduction rule removed from $G$. Let 
    $\hat{G}$ and $\hat{G}'$ denote the graphs obtained by exploding the vertices of $W$ in $G$ and $G'$, respectively.
    If $v \in W$, then $\hat{G}$ only contains $|P|$ additional connected components compared to $\hat{G'}$, each of 
    which consists of two adjacent degree-1 vertices. Since $\hat{G'}$ has pathwidth at most 1 and a connected 
    component consisting of two adjacent degree-1 vertices also has pathwidth 1, $W$ is a POES of $G$.
    
    Now consider the case where $v \notin W$, i.e., $v \in V(\hat{G'})$. Recall that no minimum POES contains a vertex 
    of degree 1, thus $l \notin W$ and $v$ is still adjacent to $l$ in $\hat{G'}$. Since $\hat{G'}$ has pathwidth at 
    most 1, $\hat{G'}$ contains no cycles or $T_2$ subgraphs by Proposition~\ref{prop:caterpillar}. Note that the graph 
    $\hat{G}$ can be obtained from $\hat{G'}$ by adding the vertices of $P$ as pendant neighbors to~$v$, thus $\hat{G}$ 
    also contains no cycles. Since $v$ already 
    has a neighbor $l$ of degree 1 in $\hat{G'}$, adding additional pendant neighbors to $v$ does not introduce $T_2$ 
    subgraphs~\cite[Lemma 9]{Philip2010Full}. Hence $\hat{G}$ contains no $T_2$ subgraphs or cycles and thus $\hat{G}$ 
    has pathwidth at most one by Proposition~\ref{prop:caterpillar}. Therefore, $W$ is a POES of $G$.
  \end{proof}
  
  Since a caterpillar has pathwidth at most 1 by Proposition~\ref{prop:caterpillar}, we can safely remove any connected 
  component of $G$ that forms a caterpillar; see Figure~\ref{fig:RR2} for an example.
  
  \begin{redrule}
    \label{rr:caterpillar}
    If $G$ contains a connected component $X$ that is a caterpillar, remove $X$ from $G$ and reduce the instance
    to $(G-X,~{S \setminus V(X)},~k)$.
  \end{redrule}
  
  If $G$ contains a connected component that is a pseudo-caterpillar, then exploding an arbitrary vertex of its spine 
  yields a caterpillar. If the spine contains no vertex of $S$, the spine is a cycle that cannot be broken by a vertex 
  explosion. However, by Proposition~\ref{prop:caterpillar}, acyclicity is a necessary condition for a graph of 
  pathwidth at most 1. Hence we get the following reduction rule; see Figure~\ref{fig:RR3} for an illustration.

  \begin{redrule}
    \label{rr:pseudoCaterpillar}
    Let $X$ denote a connected component of $G$ that is a pseudo-caterpillar. If the spine of $X$ contains
    a vertex of $S$, remove $X$ from $G$ and reduce the instance to $(G-X,~{S \setminus V(X)},~k-1)$.
    Otherwise reduce to a trivial no-instance.
  \end{redrule}
  
  Recall that the degree* of a vertex is the number of its non-pendant neighbors. Our next goal is 
  to shorten paths of degree*-2 vertices to at most two vertices. If we have a path $x,y,z$ of degree*-2 vertices, we 
  refer to $y$ as a \emph{2-enclosed} vertex. Note that exploding a 2-enclosed vertex $y$ cannot eliminate any $N_2$ 
  substructures from the instance. By Proposition~\ref{prop:caterpillar}, vertex $y$ can thus only be part of an 
  optimal solution if exploding $y$ breaks cycles. 
  If we want to shorten the chain $x,y,z$ by contracting $y$ into one of its neighbors, we therefore need to ensure 
  that the shortened chain contains a vertex of $S$ if and only if the original chain contained a vertex of $S$. If $y 
  \in S$, we cannot simply add one of its neighbors, say $x$, to $S$ in the reduced instance, because exploding $x$ may 
  additionally remove an $N_2$ substructure; see Figure~\ref{fig:ShortenCounterexample} for an example. 
  While shortening paths of degree*-2 vertices to at most three vertices is simple, shortening them to length at most 2 (i.e., eliminating all 2-enclosed vertices) is therefore more involved.
  To solve this problem, we will show that we can greedily decide whether a 
  2-enclosed vertex $y$ is part of an optimal solution or not. This means that we can either immediately explode $y$, 
  or we can safely contract it into one of its degree*-2 neighbors. We start with the following auxiliary lemma.
  
  \begin{figure}[t]
    \centering
    \includegraphics[page=1,scale=1]{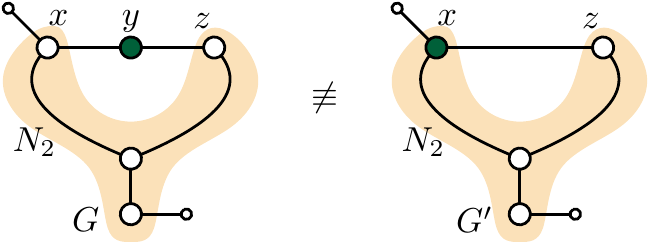}
    \caption{A graph $G$ that has no POES, because the highlighted $N_2$ substructure contains no vertex of $S$. For 
    the 
    graph $G'$ resulting from contracting $y$ into $x$, the set $\{x\}$ is a POES. The two instances are therefore not 
    equivalent.}
    \label{fig:ShortenCounterexample}
  \end{figure}
  
  \begin{mylemma}
    \label{lm:cycles}
    Let $y \in S$ be a 2-enclosed vertex of $G$ and let $\mathcal{C}_y$ denote the set of simple cycles of $G$ that 
    contain $y$. If $|C \cap S| \geq 2$ holds for every cycle $C \in \mathcal{C}_y$, then there exists a minimum POES 
    of $G$ that does not contain $y$.
  \end{mylemma}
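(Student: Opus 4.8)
The plan is to start from a minimum POES~$W$ of~$G$ (if $G$ admits no POES there is nothing to prove), assume $y\in W$ since otherwise we are already done, and then replace~$y$ in~$W$ by at most one other vertex of~$S$. The first ingredient I would set up is a purely combinatorial description of POESs: a set $W\subseteq S$ is a POES of~$G$ if and only if $W$ contains a vertex of every simple cycle and of every $N_2$ substructure of~$G$. This follows from the characterization of pathwidth-$1$ graphs in Proposition~\ref{prop:caterpillar}, together with the observation that exploding a vertex deletes it, inserts only degree-$1$ vertices, and alters neither the degrees nor the mutual adjacencies of the remaining vertices; hence the graph $\hat G$ obtained by exploding~$W$ is acyclic (resp.\ contains no $N_2$ substructure) precisely when no simple cycle (resp.\ no $N_2$ substructure) of~$G$ avoids~$W$, and moreover any subgraph of~$G$ whose vertex set is disjoint from~$W$ reappears unchanged in~$\hat G$.

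Next I would record two structural consequences of~$y$ being $2$-enclosed, i.e., of $\degs(y)=\degs(x)=\degs(z)=2$, where $x$ and~$z$ denote the two non-pendant neighbors of~$y$. First, $y$ lies in no $N_2$ substructure: it cannot be the root, having only two neighbors of degree at least~$2$, and if it were one of the three degree-$\geq 2$ neighbors of some root~$r$, then~$r$ would itself be a degree-$\geq 2$ neighbor of~$y$, hence $r\in\{x,z\}$, contradicting $\degs(x)=\degs(z)=2$; consequently $W\setminus\{y\}$ still meets every $N_2$ substructure of~$G$. Second, every cycle $C\in\mathcal{C}_y$ uses both edges $xy$ and~$yz$, since the two neighbors of~$y$ on~$C$ have degree at least~$2$ and hence are non-pendant neighbors of~$y$; so $C$ splits canonically into~$y$ and a simple $x$--$z$ path $R_C\coloneqq C-y$ avoiding~$y$, and distinct cycles in~$\mathcal{C}_y$ yield distinct paths~$R_C$.

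The crux is to prove that at most one cycle of~$\mathcal{C}_y$ is \emph{critical}, where I call $C\in\mathcal{C}_y$ critical if $V(C)\cap W=\{y\}$. Note that, as $W$ is a POES, $\hat G$ has pathwidth at most~$1$ and is in particular acyclic. For a critical cycle~$C$, the path $R_C$ is disjoint from~$W$ and therefore reappears unchanged in~$\hat G$. If there were two distinct critical cycles $C_1,C_2$, then $R_{C_1}\neq R_{C_2}$ would be two distinct simple paths with the same endpoints $x$ and~$z$; the symmetric difference of their edge sets would then be a non-empty subgraph all of whose vertex degrees are even, hence would contain a cycle, and this cycle lies in $R_{C_1}\cup R_{C_2}$, whose vertex set is disjoint from~$W$. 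By the discussion above this cycle would survive in~$\hat G$, contradicting its acyclicity. I expect this step to be the main obstacle: the argument is short, but it hinges on handling the interplay between cycles of~$G$, cycles of~$\hat G$, and the set~$W$ through the combinatorial description from the first step.

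With at most one critical cycle in hand the rest is bookkeeping. If no cycle of~$\mathcal{C}_y$ is critical I would set $W'\coloneqq W\setminus\{y\}$; otherwise, writing $C^\ast$ for the unique critical cycle and using the hypothesis $|V(C^\ast)\cap S|\geq 2$ to pick a vertex $s\in(V(C^\ast)\cap S)\setminus\{y\}$, I would set $W'\coloneqq(W\setminus\{y\})\cup\{s\}$. In either case $W'\subseteq S$, $y\notin W'$, $W\setminus\{y\}\subseteq W'$, and $|W'|\leq|W|$. Finally, I would check via the combinatorial description that $W'$ is a POES: it meets every $N_2$ substructure because $W\setminus\{y\}$ already does, and it meets every simple cycle~$C$ of~$G$ because either $W$ meets~$C$ in a vertex other than~$y$ (which then lies in $W\setminus\{y\}\subseteq W'$), or $V(C)\cap W=\{y\}$, forcing $C=C^\ast$ and hence $s\in V(C)\cap W'$. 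Thus $W'$ is a POES of size at most~$|W|$ that avoids~$y$; since $W$ has minimum size, so does~$W'$, which proves the lemma.
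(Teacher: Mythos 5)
Your proof is correct and follows essentially the same route as the paper's: both reduce the lemma to a hitting-set view of POESs, use the symmetric difference of two cycles through $y$ (in which the edges $xy$ and $yz$ cancel) to show that at most one cycle of $\mathcal{C}_y$ is met by $W$ only at $y$, and then exchange $y$ for another vertex of $S$ on that single cycle. Your write-up is somewhat more explicit than the paper's (the hitting-set characterization and the check that $y$ lies in no $N_2$ substructure are spelled out rather than invoked), but the argument is the same.
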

  
  \begin{proof}
    Let $W$ be a minimum POES of size $k$ for $G$. As argued above, the 2-enclosed vertex $y$ can only be part of an 
    optimal solution if exploding $y$ breaks a cycle. Assume without loss of generality that $y \in W$.
    For two cycles $C_1, C_2 \in \mathcal{C}_y$, define $C_1 \oplus 
    C_2$ as the symmetric difference of the edges in $C_1$ and $C_2$, i.e., an edge is present in $C_1 \oplus C_2$ if 
    and only if it is present in exactly one of the cycles $C_1$ and $C_2$. Since $y$ is contained in a chain of 
    \mbox{degree*-2} vertices, $C_1 \oplus C_2$ is a collection of cycles that do not contain $y$. Now consider 
    the set $\hat{W} \coloneqq W \setminus \{y\}$. Since $W$ is a minimum POES for $G$ and since $C_1 \oplus 
    C_2$ does not contain $y$, exploding $\hat{W}$ still breaks all cycles of $C_1 \oplus C_2$. 
    
    Assume that there exist two distinct cycles $C_1$ and $C_2$ in $\mathcal{C}_y$ that both remain intact after 
    exploding $\hat{W}$. This means that all cycles of $C_1 \oplus C_2$ also remain intact, a contradiction to the 
    assumption that $W$ is a POES of $G$. We can therefore have at most one cycle $C \in \mathcal{C}_y$ with $C \cap 
    \hat{W} = \emptyset$. Since $|C \cap S| \geq 2$ holds by prerequisite of this lemma, we can pick an arbitrary 
    vertex $v \in C \cap S$ with $v \neq y$ and we find that $\hat{W} \cup \{v\}$ is a POES of size $k$ for $G$ that 
    does not contain vertex $y$.
  \end{proof}

  If a degree*-2 neighbor $y$ of a 2-enclosed vertex $v$ is contained in $S$, Lemma~\ref{lm:cycles} guarantees that 
  there exists a minimum POES that does not contain $v$, because every cycle that contains $v$ also contains $y$. We 
  can therefore define the following simple auxiliary reduction rule that ensures that no 2-enclosed vertex of $S$ is 
  adjacent to another degree*-2 vertex of~$S$; see Figure~\ref{fig:RR4}. This will be helpful for our next reduction 
  rule, because it reduces the number of cases we have to consider.
  \begin{redrule}
    \label{rr:nonAdj}
    Let $v$ be a 2-enclosed vertex of $G$ adjacent to a degree*-2 vertex $y \in S$. Reduce the instance to $(G, S 
    \setminus \{v\}, k)$.
  \end{redrule}
  Let $S_2 \subseteq S$ denote the set of 2-enclosed vertices contained in $S$. We now use Lemma~\ref{lm:cycles} to 
  greedily determine for each vertex in $S_2$ whether it should be contained in a minimum POES or not. Let $G' 
  \coloneqq G[V(G) \setminus (S \setminus S_2)]$ denote the graph obtained by removing all vertices of $S$ that are not 
  contained in $S_2$ from $G$. Let further $\hat{G}$ denote the graph obtained from $G'$ by first removing all pendant 
  vertices and subsequently contracting every connected component of $G'[V(G') \setminus S_2]$ into a single vertex. 
  Compute an arbitrary spanning forest $T$ of $\hat{G}$. 
  Let $\Sxpl \subseteq S_2$ denote the vertices of $S_2$ that are a leaf of $T$ and let $\Skeep \coloneqq 
  S_2 \setminus \Sxpl$ denote the remaining vertices of $S_2$ that are thus inner nodes of $T$; see 
  Figure~\ref{fig:spanningTree} for an illustration.

\begin{figure}
    \centering
    \begin{subfigure}[b]{0.3\textwidth}
      \centering
      \includegraphics[width=0.8\textwidth,page=1]{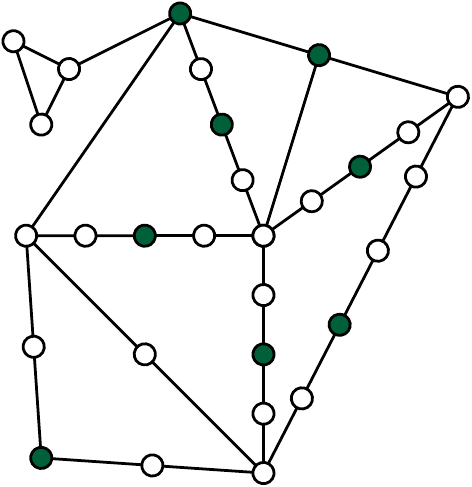}
      \caption{}
      \label{fig:spanningTreeA}
    \end{subfigure}
    \hfill
    \begin{subfigure}[b]{0.3\textwidth}
      \centering
      \includegraphics[width=0.8\textwidth,page=2]{spanningTree}
      \caption{}
      \label{fig:spanningTreeB}
    \end{subfigure}
    \hfill
    \begin{subfigure}[b]{0.3\textwidth}
      \centering
      \includegraphics[width=0.8\textwidth,page=3]{spanningTree}
      \caption{}
      \label{fig:spanningTreeC}
    \end{subfigure}
    \caption{(a) A graph $G$ with the vertices of $S$ marked in green. (b) The corresponding graph $G'$ obtained after 
    removing all vertices of $S$ that are not 2-enclosed, i.e., the remaining green vertices form the set $S_2$. (c) 
    The graph $\hat{G}$ obtained after contracting all connected components of $G'[V(G') \setminus S_2]$ into a single 
    vertex in $G'$, together with a spanning tree $T$ highlighted in orange. The vertices of $S_2$ that are leaves of 
    $T$ compose the set $\Sxpl$ (marked with red crosses), the remaining vertices of $S_2$ compose the set $\Skeep$.}
    \label{fig:spanningTree}
  \end{figure}

  \begin{mylemma}
    \label{lm:keepSplit}
    There exists a minimum POES $W$ of $G$ such that $\Skeep \cap W = \emptyset$ and $\Sxpl \subseteq W$.
  \end{mylemma}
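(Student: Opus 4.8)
The plan is to prove the lemma in two halves, first showing that $\Sxpl$ can be entirely included in a minimum POES, and then showing that $\Skeep$ can be entirely excluded, and finally combining the two via a single exchange argument. The crucial observation underlying both halves is that for a $2$-enclosed vertex $y$, the set $\mathcal{C}_y$ of simple cycles through $y$ is tightly constrained by the structure of $\hat{G}$: after removing pendants and contracting each connected component of $G'[V(G')\setminus S_2]$ to a point, a cycle through $y$ corresponds to a cycle in $\hat{G}$ through the corresponding node, and $y$ lies on a cycle in $\hat{G}$ if and only if it is not a bridge-vertex in the relevant sense — in particular, a leaf of the spanning forest $T$ sits on no cycle of $\hat{G}$ that survives, whereas an inner node does. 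So the role of $T$ is precisely to certify, for each $v\in\Sxpl$, that every simple cycle of $G$ through $v$ uses an edge outside the tree, hence passes through another $S_2$-vertex on that tree-cycle.

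First I would handle $\Skeep$. For each $v\in\Skeep$, I want to invoke Lemma~\ref{lm:cycles} to conclude there is a minimum POES avoiding $v$. The hypothesis I must verify is that every cycle $C\in\mathcal{C}_v$ satisfies $|C\cap S|\ge 2$. Since $v$ is an inner node of $T$, every simple cycle through the contracted image of $v$ in $\hat{G}$ must leave $T$ via some non-tree edge, and following that fundamental cycle one encounters at least one further node of $\hat{G}$ that is an $S_2$-vertex (here one uses that the contracted blobs are exactly the non-$S_2$ parts, so consecutive $S_2$-vertices along a cycle in $\hat G$ are separated only by blobs, and a fundamental cycle through $v$ that is not a single tree-path necessarily touches a second $S_2$-node — this needs the spanning forest to be taken of $\hat G$, not of a subgraph). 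Pulling this back to $G$, the original cycle $C$ contains $v\in S$ and this second vertex, which lies in $S_2\subseteq S$, giving $|C\cap S|\ge 2$. Then Lemma~\ref{lm:cycles} applies.

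The subtlety, and the main obstacle, is that Lemma~\ref{lm:cycles} only gives us one vertex of $\Skeep$ out of a solution at a time, and naively re-applying it could re-introduce a previously removed vertex. To get all of $\Skeep$ out simultaneously, and at the same time force all of $\Sxpl$ in, I would argue directly with an exchange/uncrossing argument rather than by iteration. Start from a minimum POES $W_0$. For the inclusion of $\Sxpl$: exploding a leaf $v\in\Sxpl$ of $T$ never hurts, since $v$ lies on no cycle of $\hat G$ and exploding it in $G$ only removes $v$'s pendant edges into fresh components of pathwidth $1$ and splits $v$'s two degree*-neighbours apart, which cannot create an $N_2$ (the neighbours keep the same degree*) or a cycle; so we may freely add $\Sxpl\setminus W_0$ to the solution without increasing its size — but I must be careful that adding these does not let us keep the same size only if the removed-$\Skeep$ savings compensate, so the accounting has to be: $|W_0\cap(\Skeep\cup\Sxpl)|$ many "bad" choices can each be swapped. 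Concretely, for each $v\in W_0\cap\Skeep$, the argument in the $\Skeep$ paragraph plus the symmetric-difference trick from Lemma~\ref{lm:cycles} shows the at-most-one surviving cycle through $v$ contains some $u\in S\setminus\{v\}$, and I would like to choose that replacement $u$ to lie in $\Sxpl$; this is exactly where the spanning forest is used — orient each fundamental cycle toward a leaf of $T$ to find an $S_2$-leaf on it. Doing all these swaps in parallel requires checking the chosen replacements for distinct $v$'s don't conflict, which follows because leaves of a forest that are pairwise "reachable along fundamental cycles" can be chosen distinctly (a counting argument on $T$). I expect this parallel-exchange bookkeeping — ensuring the replacements land in $\Sxpl$, are pairwise compatible, and that Reduction Rule~\ref{rr:nonAdj} has already removed the annoying adjacency case so each $v\in S_2$ has its degree*-neighbours outside $S$ — to be the technically delicate part; everything else is a direct application of Proposition~\ref{prop:caterpillar} and Lemma~\ref{lm:cycles}.
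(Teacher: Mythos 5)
Your first half (excluding $\Skeep$ from the solution) follows the paper's route: verify the hypothesis of Lemma~\ref{lm:cycles} for each $v\in\Skeep$ by distinguishing whether the cycle leaves $G'$ and, if not, locating a non-tree edge of the induced cycle in $\hat{G}$ whose $S_2$-endpoint must be a leaf of $T$. However, the iteration worry you raise is resolved far more cheaply than your ``parallel exchange'' plan: the paper simply passes to the equivalent instance $(G, S\setminus\{v\},k)$ and repeats. This is sound because the second $S$-vertex exhibited on every cycle through a $\Skeep$-vertex always lies in $S\setminus\Skeep$, hence is never removed from $S$ in later steps, so the hypothesis of Lemma~\ref{lm:cycles} is preserved throughout; no compatibility bookkeeping among distinct replacements is needed, and your sketch of that bookkeeping (``leaves pairwise reachable along fundamental cycles can be chosen distinctly'') is left unproven.

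The genuine gap is in the second half, $\Sxpl\subseteq W$. You propose to ``freely add $\Sxpl\setminus W_0$ to the solution without increasing its size'' because exploding a leaf of $T$ ``never hurts''; but adding vertices to a POES does increase its cardinality, and minimality is precisely what is at stake --- you notice this (``the accounting has to be\ldots'') but never resolve it, and there is no reason the savings from evicting $\Skeep$-vertices should match the cost of inserting $\Sxpl$-vertices. Moreover, your supporting claim that a leaf $v$ of the spanning forest $T$ ``lies on no cycle of $\hat G$'' is false (the endpoints of a spanning path of a triangle are leaves of $T$ yet lie on the cycle); in fact the opposite is the whole point. The correct argument is a forcing argument: every $v\in\Sxpl$ lies on a cycle of $G$ --- either a cycle through a single contracted component, if $v$ has degree $1$ in $\hat G$ (a case you omit), or the fundamental cycle of its non-tree edge --- whose only vertex of $S\setminus\Skeep$ is $v$ itself, where Reduction Rule~\ref{rr:nonAdj} guarantees that the $\hat G$-neighbours of $v$ are contracted components outside $S$. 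Hence any POES disjoint from $\Skeep$ \emph{must} contain $v$ to break that cycle, and $\Sxpl\subseteq W$ comes for free with no exchange or size accounting. Without this forcing step your proof of the inclusion does not go through.
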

  \begin{proof}
    Let $v \in \Skeep$ and let $C$ denote an arbitrary simple cycle of $G$ that contains $v$. We want to show 
    that $C$ always contains a vertex of $S \setminus \Skeep$, which will allow us to use Lemma~\ref{lm:cycles} to find 
    a minimum POES of $G$ that does not contain $v$. First consider 
    the case where $C$ is not completely contained in a connected component of $G'$. This means that removing the 
    vertices of $S \setminus S_2$ from $G$ splits cycle $C$, thus $C$ contains a vertex of $S \setminus S_2 \subseteq S 
    \setminus \Skeep$. Now consider the case where $C$ is completely contained in the graph $G'$. 
    If a vertex of $x \in C \cap S_2$ has degree 1 in $\hat{G}$, then $x$ is a leaf of $T$ and therefore 
    contained in $\Sxpl$, thus $C$ contains a vertex of $S \setminus \Skeep$. Otherwise every vertex of $C \cap S_2$ 
    has degree~2 in $\hat{G}$ and the construction of $\hat{G}$ ensures that cycle $C$ of $G'$ also induces a 
    cycle $\hat{C}$ in~$\hat{G}$. Because $T$ is a spanning forest of $\hat{G}$, cycle 
    $\hat{C}$ must contain an edge $e$ that is not contained in~$T$. Due to the construction of $\hat{G}$, one endpoint 
    $y$ of $e$ must be a vertex of $S_2$. But because $y$ has degree 2 in $\hat{G}$ and its incident edge $e$ is not 
    part of $T$, $y$ must be a leaf of $T$. Thus $y \in \Sxpl$, which again yields a vertex of $S \setminus \Skeep$ 
    contained in cycle $C$.
    
    We have shown that, for every $v \in \Skeep$ and for every simple cycle $C$ of $G$ containing $v$, $C$ also
    contains a vertex of $S \setminus \Skeep$. Consequently, by Lemma~\ref{lm:cycles}, there exists a minimum 
    POES of $G$ that does not contain $v$. Given the initial instance $\mathcal{I} = (G, S, k)$ of \pove, the instance 
    $\mathcal{I'} = (G, S \setminus \{v\}, k)$ is therefore equivalent. Because we have shown that every cycle of $G$ 
    that contains a vertex of $\Skeep$ also contains a vertex of $S \setminus \Skeep$, we can repeatedly apply this 
    step to obtain the equivalent instance $\mathcal{I^*} = (G, S \setminus \Skeep, k)$. Note that we do not actually 
    alter the initial instance $\mathcal{I}$, but the existence of the equivalent instance $\mathcal{I^*}$ shows that 
    there exists a minimum POES of $G$ that contains no vertices of $\Skeep$.
    
    Let $W$ denote a minimum POES of $G$ with $\Skeep \cap W = \emptyset$. We now want to show that 
    $\Sxpl \subseteq W$ always holds. First consider a vertex $v \in \Sxpl$ that has degree 1 
    in $\hat{G}$. Note that Reduction Rule~\ref{rr:nonAdj} ensures that $v$ is not adjacent to a degree*-2 vertex of 
    $S$ in $G$, thus 
    $v$ cannot have degree 1 in $G'$. Vertex $v$ can therefore only have degree 1 in $\hat{G}$ if both neighbors of $v$ 
    in $G'$ lie in the same connected component $H$ of $G'[V(G') \setminus S_2]$. Thus $v$ and vertices of $H$ form a 
    cycle $C$ with $C \cap S = \{v\}$. In order to break cycle $C$, it is therefore necessary to explode $v$ and thus 
    $v \in W$.
    
    Now consider a vertex $v \in \Sxpl$ that has degree 2 in $\hat{G}$. Let $x$ and $y$ denote the neighbors 
    of $v$ in $\hat{G}$. Let $\hat{C}$ denote the 
    cycle of $\hat{G}$ consisting of the path $x,v,y$ and the unique path of spanning forest $T$ connecting $x$ 
    and~$y$. Since $v$ is a leaf of $T$, one of the edges $vx$ or $vy$ is not contained in~$T$, thus $\hat{C}$ is 
    indeed a cycle. Reduction Rule~\ref{rr:nonAdj} ensures that no two vertices of $S_2$ are adjacent in~$G$, and 
    thus the construction of $\hat{G}$ guarantees that $x, y \notin S$. Because $x, y \notin S$, $v$ is the only vertex 
    of $S$ contained in $\hat{C}$ that is also a leaf 
    of $T$, thus $\hat{C} \cap \Sxpl = \{v\}$. We therefore only have a single vertex $v$ of $\Sxpl$ 
    contained in $\hat{C}$, all other vertices of $S$ contained in $\hat{C}$ must be a subset of $\Skeep$. Note that 
    the construction of $\hat{G}$ guarantees that we find a cycle $C$ in $G$ with the same properties. But 
    because the POES~$W$ does not contain any vertices of $\Skeep$, $W$ must contain $v$ in order to 
    break the cycle $\hat{C}$ in $\hat{G}$ (and consequently the cycle $C$ in $G$).
  \end{proof}

  Lemma~\ref{lm:keepSplit} now allows us to eliminate all 2-enclosed vertices and thus lets us shorten chains of 
  degree*-2 vertices to length at most 2. We state this in the following reduction rule; see Figure~\ref{fig:RR5} for 
  an illustration.
  
  \begin{figure}
    \centering
    \begin{subfigure}[b]{0.45\textwidth}
      \centering
      \includegraphics[width=\textwidth,page=2]{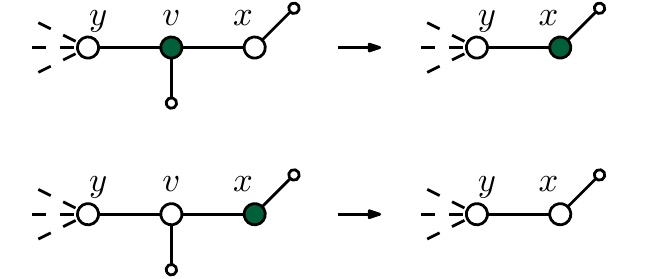}
      \caption{}
      \label{fig:RR5}
    \end{subfigure}
    \hfill
    \begin{subfigure}[b]{0.45\textwidth}
      \centering
      \includegraphics[width=\textwidth,page=1]{RRTwoEnclosed}
      \caption{}
      \label{fig:RR6}
    \end{subfigure}
    \caption{Examples illustrating the two cases of Reduction Rule~\ref{rr:twoEnclosed} (a) and Reduction 
    Rule~\ref{rr:deg1}~(b).}
  \end{figure}
  
  \begin{redrule}
    \label{rr:twoEnclosed}
    Let $v$ denote a 2-enclosed vertex of $G$ with degree*-2 neighbors $x$ and~$y$.
    \begin{itemize}
      \item If $v \in \Sxpl$, let $G'$ denote the graph obtained from $G$ by exploding $v$. Reduce the instance to 
      $(G', S \setminus \{v\}, k - 1)$.
      \item Otherwise, remove $v$ from $G$, add the new edge $xy$, and reduce the instance to $(G - v + xy, S \setminus 
      \{v\}, k)$.
    \end{itemize}
  \end{redrule}
  \begin{proof}[Proof of Safeness]
    If $v \in \Sxpl$, then Lemma~\ref{lm:keepSplit} immediately tells us that it is safe to explode~$v$, thus the first 
    case is safe. 
    
    If $v \notin \Sxpl$, then Lemma~\ref{lm:keepSplit} lets us assume, without loss of generality, that $v \notin S$. 
    Note that $x$ and $y$ cannot be adjacent, because Reduction Rule~\ref{rr:pseudoCaterpillar} removes all connected 
    components that form a pseudo-caterpillar from $G$. The reduction therefore does not introduce multi-edges. Because $v$ is 
    2-enclosed, the reduction retains all $N_2$ substructures and all 
    cycles of~$G$. Because $v \notin S$, any solution for the original instance is also 
    a solution for the reduced instance and vice versa, thus the second case is also safe.
  \end{proof}
  
  To simplify the instance even further, the following reduction rule removes all degree*-2 vertices that are adjacent 
  to a vertex of degree* 1; see Figure~\ref{fig:RR6} for an illustration.
  
  \begin{redrule}
    \label{rr:deg1}
    Let $v$ be a degree*-2 vertex of $G$ with non-pendant neighbors $x$ and~$y$, such that $x$ has degree* 1.
    Remove $v$ from $G$ and add a new edge $xy$. If $v \in S$, reduce to $(G - v + xy, (S \setminus \{v\}) 
    \cup \{x\}, k)$. Otherwise reduce to $(G - v + xy, S \setminus \{x\}, k)$.
  \end{redrule}
  \begin{proof}[Proof of Safeness]
    Since $x$ has degree* 1, $x$ and $v$ cannot be contained in a cycle of $G$, and $x$ cannot be contained in a cycle 
    of $G - v + xy$. Hence we only have to consider $N_2$ substructures. Because $x$ itself has degree* 1 and is not 
    adjacent to a vertex of degree* at least~3, $x$ cannot be contained in an $N_2$ substructure of $G$. Since $x$ is 
    not contained in a cycle or $N_2$ substructure, $x$ is therefore also not contained in a minimum POES of $G$. Note 
    that $x$ must have a pendant neighbor, because otherwise, $x$ itself would be a pendant neighbor of~$v$. This means 
    that any $N_2$ substructure $H$ of $G$ containing $v$ is also present in $G - v + xy$, with vertex $x$ replacing 
    $v$ in $H$.
    Observe that the reduction modifies the set $S$ to ensure that $x$ can be exploded in the reduced instance 
    $\mathcal{I'}$ if and only if $v$ can be exploded in the original instance~$\mathcal{I}$. Therefore, a minimum POES 
    $W'$ of $\mathcal{I'}$ can be obtained from a minimum POES $W$ of $\mathcal{I}$ by replacing $v$ with 
    $x$ in $W$ and vice versa.
  \end{proof}
  
  Recall that the global potential $\mu(G)$ indicates how far away we are from our goal of eliminating all $N_2$ 
  substructures from $G$. With the following lemma, we show that our reduction rules ensure that the number of vertices 
  in the graph $G$ is bounded linearly in the global potential of $G$.
  
   \begin{mylemma}
    \label{lm:kernel}
    After exhaustively applying Reduction Rules~\ref{rr:pendant}--\ref{rr:deg1}, it holds that $|V(G)| \leq 8 \cdot 
    \mu(G)$.
  \end{mylemma}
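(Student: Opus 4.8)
The plan is to partition the vertex set of the reduced graph $G$ by degree* and bound each class in terms of $\mu(G)$. Since we may assume $G$ has no isolated vertices, write $V(G) = P \cup L \cup D \cup B$, where $P$ is the set of pendant vertices and $L$, $D$, $B$ collect the vertices of degree* equal to $1$, equal to $2$, and at least $3$, respectively; every non-pendant vertex indeed has degree* at least $1$, since otherwise all of its at least two neighbors would be pendant, contradicting Reduction Rule~\ref{rr:pendant}. Only the vertices of $B$ have positive potential, each at least $1$, so $|B| \le \mu(G)$ and $\sum_{v\in B}\degs(v) = \sum_{v\in B}(\degs(v)-2) + 2|B| = \mu(G) + 2|B|$. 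By Reduction Rule~\ref{rr:pendant} each vertex has at most one pendant neighbor, and by Reduction Rule~\ref{rr:caterpillar} there is no $K_2$ component, so every pendant vertex has a unique non-pendant neighbor; hence $|P| \le |L|+|D|+|B|$, and since $|B| \le \mu(G)$ it suffices to prove the key inequality $|L|+|D| \le \mu(G)+2|B|$.

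To establish the key inequality I would count the pairs $(v,u)$ with $v\in L\cup D$, $u\in B$, and $uv\in E(G)$. Summing over $u$, this count is at most $\sum_{u\in B}\degs(u) = \mu(G)+2|B|$. For the matching lower bound I claim that every $v\in L\cup D$ has at least one non-pendant neighbor in $B$, so the count is at least $|L|+|D|$. If $v\in L$, its unique non-pendant neighbor is neither in $L$ (two adjacent degree*-$1$ vertices span a caterpillar component, which Reduction Rule~\ref{rr:caterpillar} removes) nor in $D$ (Reduction Rule~\ref{rr:deg1}), hence lies in $B$. If $v\in D$ had both of its two distinct non-pendant neighbors in $D$, then $v$ would be a $2$-enclosed vertex, contradicting the exhaustive application of Reduction Rule~\ref{rr:twoEnclosed}; so at least one of them is in $B$. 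Combining the two bounds gives $|L|+|D|\le\mu(G)+2|B|$, hence $|L|+|D|+|B| \le \mu(G)+3|B| \le 4\mu(G)$, and finally $|V(G)| = |P| + |L|+|D|+|B| \le 2(|L|+|D|+|B|) \le 8\mu(G)$.

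The delicate point is the constant: bounding $|D|$ on its own (say, by the number of $B$--$D$ edges) and then adding separate bounds for $|L|$ and $|B|$ loses too much and lands near $14\mu(G)$ instead of $8\mu(G)$; the joint double count over the branch vertices, using $\sum_{u\in B}\degs(u)=\mu(G)+2|B|$ and $|B|\le\mu(G)$ directly, is what makes it tight (and the bound $|V(G)| \le 8\mu(G)$ is essentially attained by a single degree*-$3$ vertex with three leaf neighbors, each carrying one pendant). The only place the reduction rules do real work is in showing that every degree*-$1$ and every degree*-$2$ vertex is adjacent to a branch vertex, so I would carefully verify that the interplay of Reduction Rules~\ref{rr:caterpillar}, \ref{rr:twoEnclosed}, and~\ref{rr:deg1} genuinely forbids a degree*-$2$ vertex whose only non-pendant neighbors are themselves of degree*~$2$.
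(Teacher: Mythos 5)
Your proof is correct and follows essentially the same route as the paper: the same partition into pendant vertices and degree*-classes, the same use of Reduction Rules~\ref{rr:pendant}, \ref{rr:caterpillar}, \ref{rr:twoEnclosed}, and~\ref{rr:deg1} to force every degree*-1 and degree*-2 vertex to be adjacent to a degree*-$\geq 3$ vertex, and the same count against $\sum_{u\in B}\degs(u)=\mu(G)+2|B|\leq 3\mu(G)$ followed by doubling for pendant neighbors. The only cosmetic difference is that you phrase the step $|L|+|D|\leq\sum_{u\in B}\degs(u)$ as an explicit double count, and the point you flag at the end (ruling out a degree*-2 vertex with no neighbor in $B$) is indeed handled exactly as you suggest, by combining the 2-enclosed argument with the absence of $L$--$D$ edges guaranteed by Reduction Rule~\ref{rr:deg1}.
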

  \begin{proof}
    Reduction Rule~\ref{rr:caterpillar} ensures that $G$ contains no vertices of degree* 0.
    For $i \in \{1,2\}$, let $V_i$ denote the set of non-pendant degree*-$i$ vertices of $G$ and let $V_3$
    denote the set of vertices with degree* at least 3.  Recall that we defined the global potential as
    \[\mu(G) = \sum_{v \in V(G)} \mu(v) = \sum_{v \in V(G)}\max(0, \degs(v) - 2).\] Since all vertices of $V_1$ and 
    $V_2$ have degree* at most 2, their potential is 0 and we get 
    \[ \mu(G) = \sum_{v \in V_3}(\degs(v) - 2) = \sum_{v \in V_3}\degs(v) - 2\cdot|V_3|. \]
    Note that $|V_3| \leq \mu(G)$, because each vertex of degree* at least 3 contributes at least 1 to the global 
    potential. We therefore get
    \begin{equation}
    \label{eq:3mu}
    \sum_{v \in V_3}\degs(v) \leq 3 \cdot \mu(G).
    \end{equation}
    By Reduction Rule~\ref{rr:twoEnclosed}, every vertex in $v \in V_2$ is adjacent to a vertex of~$V_1 \cup 
    V_3$, since otherwise, $v$ would be 2-enclosed. However, Reduction Rule~\ref{rr:deg1} additionally ensures that 
    vertices of $V_2$ cannot be adjacent to vertices of $V_1$, 
    thus every vertex of $V_2$ must be adjacent to a vertex of~$V_3$. Note that two adjacent vertices of $V_1$ would 
    form a caterpillar, which is prohibited by Reduction Rule~\ref{rr:caterpillar}. Therefore, every vertex of $V_1$ is 
    also adjacent to a vertex of $V_3$. 
    
    Overall, every vertex of $V_1$ and $V_2$ is thus adjacent to a vertex of $V_3$. Note that every vertex $v \in V_1$ 
    must additionally have a pendant neighbor, because otherwise, $v$ itself would be a pendant vertex. Hence every 
    vertex of $V_1$ and $V_2$ has degree at least 2 and thus contributes to the degree* of its neighbor in $V_3$. We 
    therefore have $|V_1| + |V_2| \leq  \sum_{v \in V_3}\degs(v)$, hence $|V_1| + |V_2| \leq 3\cdot \mu(G)$ by 
    Equation~\ref{eq:3mu}. Recall that $|V_3| \leq \mu(G)$, thus $|V_1| + |V_2| + |V_3| \leq 4\cdot \mu(G)$. By 
    Reduction Rule~\ref{rr:pendant}, each of these vertices can have at most one pendant neighbor and thus $|V(G)| \leq 
    8\cdot \mu(G)$.
  \end{proof}
  
  With Lemma~\ref{lm:kernel}, it now only remains to find an upper bound for the global potential~$\mu(G)$. We do this 
  using the following two reduction rules.
  \begin{redrule}
    \label{rr:explodeK}
    Let $v$ be a vertex of $G$ with potential $\mu(v) > k$. If $v \in S$, explode $v$ to obtain the graph $G'$ 
    and reduce the instance to $(G',~{S \setminus \{v\}},~k-1)$. Otherwise reduce to a trivial no-instance.
  \end{redrule}
  \begin{proof}[Proof of Safeness]
    Since exploding a vertex $u \in V(G)\setminus\{v\}$ decreases $\mu(v)$ by at most one, after exploding at most
    $k$ vertices in $V(G)\setminus\{v\}$ we still have $\mu(v) > 0$. Because $\mu(v) > 0$ implies that $G$ contains an 
    $N_2$ substructure, it is therefore always necessary to explode vertex~$v$ by Proposition~\ref{prop:caterpillar}.
  \end{proof}
  \begin{redrule}
    \label{rr:globalPOVE}
    If $\mu(G) > 2k^2 + 2k$, reduce to a trivial no-instance.
  \end{redrule}
  \begin{proof}[Proof of Safeness]
    By Reduction Rule~\ref{rr:explodeK} we have $\mu(v) \leq k$ and therefore $\degs(v) \leq k + 2$ for all $v \in 
    V(G)$. Hence
    exploding a vertex $v$ decreases the potential of $v$ by at most~$k$ and the potential of each of its non-pendant 
    neighbors by at most 1. Overall, $k$ vertex explosions can therefore only decrease the global potential $\mu(G)$ by 
    at most $k \cdot (2k + 2)$.
  \end{proof}

  Because Reduction Rule~\ref{rr:globalPOVE} gives us an upper bound for the global potential $\mu(G)$, we can use 
  Lemma~\ref{lm:kernel} to obtain the kernel.
  \begin{mytheorem}
    \label{thm:quadKernel}
    The problem \POVE admits a kernel of size $16k^2 + 16k$. It can be computed in time $O(m)$.
  \end{mytheorem}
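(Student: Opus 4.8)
The plan is to assemble the kernelization procedure from all eight reduction rules and then put together the size bounds already proved; correctness is immediate from the safeness of the individual rules, so the content is the size bound and the $O(m)$ running time. I would run the rules in three stages. \emph{Stage~1:} exhaustively apply Reduction Rules~\ref{rr:explodeK} and~\ref{rr:pseudoCaterpillar}; afterwards every vertex has potential at most~$k$ and $G$ has no pseudo-caterpillar component (each of these two rules decreases $k$ by one, so the stage terminates, possibly by producing a trivial no-instance). \emph{Stage~2:} apply Reduction Rule~\ref{rr:globalPOVE}; if the result is not a trivial no-instance, then $\mu(G)\le 2k^2+2k$. \emph{Stage~3:} exhaustively apply Reduction Rules~\ref{rr:pendant},~\ref{rr:caterpillar},~\ref{rr:nonAdj},~\ref{rr:twoEnclosed} and~\ref{rr:deg1}. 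The crucial structural observations are that none of the Stage-3 rules ever increases the degree* of a vertex or creates a cycle, and that an explosion performed by Reduction Rule~\ref{rr:twoEnclosed} only turns the degree*-2 neighbours of the exploded 2-enclosed vertex into degree*-1 vertices; hence Stage~3 cannot re-enable Reduction Rules~\ref{rr:explodeK},~\ref{rr:pseudoCaterpillar}, or~\ref{rr:globalPOVE} (in particular, since Stage~1 has already removed every component that is a cycle with pendants, no new pseudo-caterpillar component can appear). Therefore, when the procedure halts, all eight rules are simultaneously inapplicable, so Lemma~\ref{lm:kernel} applies and gives
\[
|V(G)| \le 8\,\mu(G) \le 8(2k^2+2k) = 16k^2+16k .
\]
The refined counting in the proof of Lemma~\ref{lm:kernel}, together with $\degs(v)\le k+2$ (Reduction Rule~\ref{rr:explodeK}) and at most one pendant neighbour per vertex (Reduction Rule~\ref{rr:pendant}), additionally bounds the number of edges by $O(\mu(G))=O(k^2)$, so the kernel has size $16k^2+16k$ in the number of vertices.

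For the running time I would implement the rules with incrementally maintained values of $\deg(\cdot)$ and $\degs(\cdot)$ and a work list of candidate vertices and components. Stage~1 costs $O(m)$ in total: the work of all explosions triggered by Reduction Rule~\ref{rr:explodeK} sums to $O(m)$ because each half-edge is ``exploded'' at most once (afterwards it sits at a degree-1 vertex, which is never exploded again), and the work of all applications of Reduction Rule~\ref{rr:pseudoCaterpillar} sums to $O(m)$ because the removed components are vertex-disjoint. Stage~2 is a single $O(m)$ check. In Stage~3, Reduction Rules~\ref{rr:pendant} and~\ref{rr:caterpillar} are handled by one traversal; Reduction Rule~\ref{rr:nonAdj} only shrinks $S$; the 2-enclosed vertices are eliminated in a single batch after computing $G'$, $\hat{G}$, a spanning forest $T$, and the sets $\Sxpl$ and $\Skeep$ in $O(m)$ time; and the degree*-1 contractions of Reduction Rule~\ref{rr:deg1} then cascade along chains, each step deleting a vertex. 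Since the structural observations above guarantee that Stage~3 is not re-entered from Stages~1--2, the expensive spanning-forest construction is performed only once; together with the amortization argument that contractions strictly decrease the edge count (so only $O(m)$ edges, and hence only $O(m)$ new degree-1 vertices, are ever created over the whole run), charging each rule application to the objects it creates or destroys yields total time $O(m)$.

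The hard part is not the arithmetic but carefully establishing the structural observations that keep the three stages from cascading into one another --- in particular, that Reduction Rules~\ref{rr:pendant}--\ref{rr:deg1} never raise any degree* and never create a cycle; that, after Stage~1, no 2-enclosed vertex lies in a component whose removal of that vertex would leave a pseudo-caterpillar; that exploding a 2-enclosed vertex creates no new 2-enclosed vertex; and that the contractions of Reduction Rules~\ref{rr:twoEnclosed} and~\ref{rr:deg1} re-create none of the earlier forbidden configurations. Some care is also needed inside the batch application of Reduction Rule~\ref{rr:twoEnclosed}, namely to order its contractions before its explosions so that the decisions recorded in $\Sxpl$ and $\Skeep$ remain valid throughout. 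With all of this in place, exhaustiveness is reached after $O(m)$ work and the size bound above yields the theorem.
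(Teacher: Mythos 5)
Your proposal follows the paper's proof in all essentials: bound the global potential by $2k^2+2k$ via Reduction Rules~\ref{rr:explodeK} and~\ref{rr:globalPOVE}, invoke Lemma~\ref{lm:kernel} to get $|V(G)|\le 8\mu(G)=16k^2+16k$, and obtain linear running time by ordering the rules so that no earlier rule is ever re-enabled (in particular, computing the spanning forest for Reduction Rule~\ref{rr:twoEnclosed} only once). The arithmetic, the reliance on Lemma~\ref{lm:kernel}, and the amortized $O(m)$ accounting all match.

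The one genuine gap is in your justification that Reduction Rules~\ref{rr:caterpillar} and~\ref{rr:pseudoCaterpillar} need not be revisited after the degree*-2 rules have run. You argue that no new pseudo-caterpillar component can appear because the later rules ``never create a cycle'' and all pseudo-caterpillars were removed up front. That is not sufficient: a component can contain a pre-existing cycle and fail to be a pseudo-caterpillar only because some vertex has three non-pendant neighbors, so if Reduction Rules~\ref{rr:nonAdj}--\ref{rr:deg1} could destroy that $N_2$ substructure while leaving the cycle intact, the component would \emph{become} a pseudo-caterpillar (or a caterpillar, in the acyclic case), Rule~\ref{rr:caterpillar} or~\ref{rr:pseudoCaterpillar} would become applicable again, and the hypothesis of Lemma~\ref{lm:kernel} (all of Rules~\ref{rr:pendant}--\ref{rr:deg1} exhausted) would not hold. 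The paper closes this with a sharper invariant: Reduction Rules~\ref{rr:nonAdj}, \ref{rr:twoEnclosed} and~\ref{rr:deg1} never eliminate an $N_2$ substructure, so once Rules~\ref{rr:caterpillar} and~\ref{rr:pseudoCaterpillar} have ensured that every remaining component contains one, no component can later turn into a (pseudo-)caterpillar. You list ``the contractions re-create none of the earlier forbidden configurations'' among the facts that ``need careful establishing'' but never establish it --- that statement, in the form of $N_2$-preservation, is precisely the missing piece, and it (together with the observation that no rule increases any potential, which also justifies applying Rules~\ref{rr:explodeK} and~\ref{rr:globalPOVE} only once) should be proved rather than deferred.
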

  \begin{proof}
    By Reduction Rule~\ref{rr:globalPOVE}, using Lemma~\ref{lm:kernel} yields a kernel of size $16k^2 + 16k$ for 
    \pove. It remains to show that we can compute the kernel in linear time. 
    
    First observe that, while some reduction rules 
    may increase the number of vertices in the instance, the number of edges never increases.
    Also note that no reduction rule increases the global potential or the potential of a single vertex. We can 
    therefore apply Reduction Rules~\ref{rr:explodeK} and~\ref{rr:globalPOVE} exhaustively in the beginning in 
    $O(m)$ time. Subsequently, we use Reduction Rules~\ref{rr:caterpillar} 
    and~\ref{rr:pseudoCaterpillar} to eliminate all connected components that are caterpillars and pseudo-caterpillars, 
    respectively. To test for the latter, it suffices to check whether every vertex in the component has at most two
    neighbors of degree* 2 or higher, for the former it suffices to additionally test for acyclicity. Both reduction
    rules can thus be exhaustively applied in linear time. We then exhaustively apply Reduction Rules~\ref{rr:nonAdj}, 
    \ref{rr:twoEnclosed}, and \ref{rr:deg1} in linear time to eliminate most degree*-2 vertices. Since these three 
    rules only affect \mbox{degree*-2} vertices, each of them can be implemented using a single pass through the graph. 
    For Reduction Rule~\ref{rr:twoEnclosed}, it is 
    not hard to see that the auxiliary graphs $G'$ and $\hat{G}$, as well as the spanning tree $T$ used to determine 
    the sets $\Skeep$ and $\Sxpl$, can be computed in $O(m)$ time. Note that Reduction Rules~\ref{rr:caterpillar} 
    and~\ref{rr:pseudoCaterpillar} ensure that every connected component of $G$ contains an $N_2$ substructure. Since 
    Reduction Rules~\ref{rr:nonAdj}, \ref{rr:twoEnclosed}, and~\ref{rr:deg1} cannot eliminate $N_2$ substructures, no 
    connected component is a (pseudo-)caterpillar after applying Reduction Rules~\ref{rr:nonAdj}, 
    \ref{rr:twoEnclosed}, and~\ref{rr:deg1} and thus we do not 
    have to apply Reduction Rules~\ref{rr:caterpillar} and~\ref{rr:pseudoCaterpillar} again. 
    Finally, we use Reduction Rule~\ref{rr:pendant} to remove excess pendant neighbors at all vertices in linear time. 
    We therefore obtain the kernel in $O(m)$ time.
  \end{proof}
  
  \section{FPT Algorithms for \POVS}
  \label{sec:povs}  
  In this section, we first adapt the kernelization algorithm from Section~\ref{sec:poveKernel} to obtain a linear 
  kernel for \povs in linear time. Subsequently, we show that \povs can also be solved in time $O((6k + 12))^k \cdot 
  m)$ using bounded search trees.
  
  \subsection{Linear Kernel}
  \label{sec:povsKernel}
  
  In order to obtain a kernel for \povs, we reuse Reduction Rules~\ref{rr:pendant} -- \ref{rr:deg1}\footnote{In the 
  first case of Reduction Rule~\ref{rr:twoEnclosed}, instead of exploding the vertex $v$, split $v$ such that the two 
  non-pendant neighbors of $v$ are separated, thus breaking all cycles $v$ is contained in.} from
  Section~\ref{sec:poveKernel}. We first show that these reduction rules are also safe in the context of \povs.
  
  \begin{mylemma}
    Reduction Rules~\ref{rr:pendant} -- \ref{rr:deg1} are safe for the problem \POVS.
  \end{mylemma}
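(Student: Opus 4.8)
The plan is to isolate the one place where the splitting model genuinely differs from the explosion model---namely how the operation interacts with $N_2$ substructures and with cycles---and re-check exactly that, reusing the \pove arguments everywhere else. Two bookkeeping observations carry Reduction Rules~\ref{rr:pendant}--\ref{rr:pseudoCaterpillar}: a minimum POS-partition never splits a pendant vertex, and it never splits a vertex of a connected component that already has pathwidth at most~$1$ (both kinds of split are wasted, and since a split is local to a single vertex it cannot help a different component). With the first of these, for Reduction Rule~\ref{rr:pendant} the forward direction restricts a pendant-free POS-partition $\tau$ of $G$ to $G'$, which can only drop parts of $\tau(v)$ and hence does not increase $|\tau|$; the backward direction takes a POS-partition $\tau'$ of $G'$ and re-attaches the removed pendant edges of $v$ to the part of $\tau'(v)$ that already contains the surviving edge $vl$---that part then has a pendant neighbour, so by~\cite[Lemma~9]{Philip2010Full} no new $T_2$ appears, pendants never create cycles, and $|\tau|=|\tau'|$. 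Reduction Rule~\ref{rr:caterpillar} is then immediate, and for Reduction Rule~\ref{rr:pseudoCaterpillar} I would argue that a pseudo-caterpillar component becomes a caterpillar by exactly one split on a spine vertex (separating its two spine edges), and that the spine cycle cannot be broken by any split on a non-spine or non-$S$ vertex---matching the case distinction on whether the spine meets $S$.

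For Reduction Rules~\ref{rr:nonAdj}, \ref{rr:twoEnclosed}, and~\ref{rr:deg1} the crux is re-establishing the split analogs of Lemma~\ref{lm:cycles} and Lemma~\ref{lm:keepSplit}. The structural fact that makes both proofs transfer is: splitting a 2-enclosed vertex $y$ cannot eliminate an $N_2$ substructure, because every neighbour of $y$ stays adjacent to exactly one copy of $y$ and so keeps its $\degs$, while each of the (at most two) copies of $y$ inherits $\degs$ at most~$2$; hence the \emph{only} effect of splitting $y$ is on the cycles through $y$. Moreover $y$ has exactly two non-pendant edges, so ``some cycle through $y$ is broken'' is equivalent to ``$\tau(y)$ separates those two edges'', an all-or-nothing property achievable by a single split. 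With these two points, the symmetric-difference argument in the proof of Lemma~\ref{lm:cycles} and the spanning-forest/leaf argument in the proof of Lemma~\ref{lm:keepSplit} go through essentially verbatim, replacing ``minimum POES $W$'' by ``minimum POS-partition $\tau$'' and ``$v\in W$'' by ``$\tau$ splits $v$''.

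Given these analogs, the three rules follow as in \pove. Reduction Rule~\ref{rr:nonAdj} holds because every cycle through the 2-enclosed vertex $v$ also passes through its degree*-2 neighbour $y\in S$, so the split analog of Lemma~\ref{lm:cycles} yields a minimum POS-partition that does not split $v$, i.e.\ dropping $v$ from $S$ preserves the answer. For Reduction Rule~\ref{rr:twoEnclosed}: if $v\in\Sxpl$, the split analog of Lemma~\ref{lm:keepSplit} gives a minimum POS-partition that splits $v$, and by coarsening we may take it to split $v$ once, separating its two non-pendant edges---exactly the split used by the rule, reducing $k$ by one; otherwise $v\in\Skeep$ (or $v\notin S$), so we may assume $v$ is not split, $x$ and $y$ are non-adjacent by Reduction Rule~\ref{rr:pseudoCaterpillar} so no multi-edge is introduced, and contracting the 2-enclosed $v$ preserves every cycle and every $N_2$ substructure, making the instances equivalent. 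Reduction Rule~\ref{rr:deg1} is self-contained and transfers directly: a degree*-1 vertex $x$ lies in no cycle and no $N_2$ substructure, so no minimum POS-partition splits it; $x$ has a pendant neighbour (else $x$ would itself be pendant at $v$), so every $N_2$ substructure through $v$ survives with $x$ in the role of $v$; and the update to $S$ makes $x$ splittable in the reduced instance exactly when $v$ was splittable in the original, yielding a size-preserving correspondence between minimum POS-partitions.

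The hard part, I expect, is the split analog of Lemma~\ref{lm:cycles}, and within it the claim that splitting a 2-enclosed vertex cannot destroy an $N_2$ substructure: in the explosion model this is immediate because no surviving vertex changes degree, whereas for splits one must run through the few ways $v$'s incident edges can be distributed and verify that neither a new $N_2$ root is created nor an existing $N_2$ substructure lost. Once this is settled, everything else is insensitive to whether a 2-enclosed vertex is exploded or split so as to disconnect its two non-pendant neighbours, since both operations have the same effect on the incident cycles at the same cost of one; the remaining work---checking that the restriction and extension maps in Reduction Rules~\ref{rr:pendant}--\ref{rr:pseudoCaterpillar} neither create nor spuriously require extra parts---is routine.
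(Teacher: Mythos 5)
Your proposal is correct and follows essentially the same route as the paper: extend the pendant edges into the cell containing $vl$ via Lemma~9 of Philip et al.\ for Reduction Rule~\ref{rr:pendant}, observe that a 2-enclosed vertex affects only its incident cycles and that one split separating its two non-pendant edges breaks them all, transfer Lemmas~\ref{lm:cycles} and~\ref{lm:keepSplit} verbatim, and handle Reduction Rule~\ref{rr:deg1} by the forced single split isolating $yv$ (resp.\ $yx$). The only nit is that a neighbour of a split 2-enclosed vertex keeps its \emph{degree} but not necessarily its $\degs$ (a copy may become pendant); since $N_2$ membership of the neighbours depends only on their plain degrees, your conclusion stands.
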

  \begin{proof}
    We first show that Reduction Rule~\ref{rr:pendant} is safe for \povs. Since $G'$ is an induced subgraph of $G$, it 
    is clear that any POS-partition of size $k$ for $G$ yields a POS-partition of size at most $k$ for $G'$.
    
    Conversely, let $\tau'$ be a POS-partition of size $k$ for $G'$. Let $l$ denote the 
    remaining pendant neighbor of $v$ in $G'$ and let $P \coloneqq V(G) \setminus V(G')$ denote the set of pendant 
    neighbors 
    of $v$ the reduction rule removed from $G$. Let $\tau$ denote the split partition of $G$ obtained from $\tau'$ by 
    adding all edges incident to the vertices in $P$ to the cell $c \in \tau'(v)$ containing the edge~$vl$. Note that 
    $\tau$ also has size $k$. Let $\hat{G}$ (respectively $\hat{G'}$) be the graph obtained from $G$ ($G'$) after 
    applying the splits defined by $\tau$ ($\tau'$). We want to show that $\hat{G}$ also has pathwidth at most 1. Let 
    $v_c'$ denote the vertex of $\hat{G'}$ corresponding to $c$ (i.e., $v_c'$ is adjacent to $l$) and let $v_c$ 
    denote the corresponding vertex of $\hat{G}$. Note that the only difference between $\hat{G'}$ and $\hat{G}$ is 
    that $v_c$ additionally has the vertices of $P$ as pendant neighbors. Since $\hat{G'}$ has pathwidth at most 1, 
    $\hat{G'}$ contains no $T_2$ subgraphs or cycles (Proposition~\ref{prop:caterpillar}). Because $\hat{G}$ only 
    contains additional degree-1 vertices, $\hat{G}$ also contains no cycles. Since $v_c'$ already has a pendant 
    neighbor $l$ in~$\hat{G'}$, adding additional pendant neighbors to $v_c'$ does not introduce any $T_2$ subgraphs 
    \cite[Lemma~9]{Philip2010Full}. Hence $\hat{G}$ contains no cycles and no $T_2$ subgraphs and thus has pathwidth at 
    most~1 by Proposition~\ref{prop:caterpillar}. Therefore, $\tau$ is a POS-partition of size $k$ for $G$ and we can 
    conclude that Reduction Rule~\ref{rr:pendant} is safe for \povs.
    
    Removing connected components that are caterpillars (Reduction Rule~\ref{rr:caterpillar}) is clearly also safe for 
    \povs. If a connected component $X$ of $G$ is a pseudo-caterpillar, any split that separates two edges belonging to 
    the spine of $X$ yields a caterpillar. Reduction Rule~\ref{rr:pseudoCaterpillar} is therefore also safe.
        
    As the next step, we show that Lemma~\ref{lm:cycles} also holds for minimum split sequences of \povs. Let $y \in S$ 
    be a 2-enclosed vertex of $G$ such that $|C \cap S| \geq 2$ holds for every simple cycle $C$ 
    containing $y$. Let $\phi$ denote a minimum POS-sequence for $G$ that splits $y$. Because $y$ is 2-enclosed, $y$ is 
    not contained in any $N_2$ substructures of $G$ and since a single split of $y$ can break all cycles containing 
    $y$, 
    $\phi$ splits $y$ exactly once. Let $\phi \setminus y$ denote the split sequence obtained from $\phi$ by 
    removing the split involving $y$. Using the same argument as the proof of Lemma~\ref{lm:cycles}, there is at most 
    one cycle $C$ in $G$ that is not broken by $\phi \setminus y$. Because $C$ contains a vertex $v \in S$ with $v \neq 
    y$, we can add an arbitrary split of $v$ that breaks $C$ to the sequence $\phi \setminus y$ and we obtain a 
    POS-sequence of size $k$ that does not split vertex $y$. The safeness of Reduction Rule~\ref{rr:nonAdj} again 
    immediately follows from Lemma~\ref{lm:cycles}.
    
    We now show that Lemma~\ref{lm:keepSplit} is also correct in the context of \povs. Specifically, we show that there 
    exists a minimum POS-sequence of $G$ that splits all vertices in $\Sxpl$ but no vertices of $\Skeep$. The first 
    part of the proof of Lemma~\ref{lm:keepSplit} uses Lemma~\ref{lm:cycles} to find a minimum POES that contains no 
    vertices of $\Skeep$. Since we have shown above that Lemma~\ref{lm:cycles} also holds for split operations, the 
    same strategy can be 
    used to find a minimum POS-sequence that splits no vertices of $\Skeep$. The second part of the proof shows that, 
    for every vertex $v \in \Sxpl$, there exists a cycle $C$ in $G$ that contains no other vertices of $S \setminus 
    \Skeep$. Given a POS-sequence $\phi$ that splits no vertices of $\Skeep$, it is therefore necessary that $\phi$ 
    splits $v$ in order to break cycle $C$, thus Lemma~\ref{lm:keepSplit} also holds for \povs. Since 
    Lemma~\ref{lm:keepSplit} is correct, the proof of safeness for Reduction Rule~\ref{rr:twoEnclosed} can also be 
    applied to \povs.
    
    Finally, consider Reduction Rule~\ref{rr:deg1}. Since $v$ is not contained in any cycles of $G$, $v$ is only 
    contained in a minimum POS-sequence $\phi$ of $G$ if $v$ is contained in an $N_2$ substructure of $G$. If $\phi$ 
    splits $v$, $\phi$ must therefore split off edge $yv$ alone at $v$, because otherwise, the resulting vertex still 
    has degree at least 2 and thus the $N_2$ substructure remains intact. Observe that, after splitting off $yv$ at 
    $v$, the other half of the split subsequently lies in a connected component that is a caterpillar. Since $\phi$ has 
    minimum size, $\phi$ therefore only splits $v$ once.
    Similarly, any minimum POS-sequence of $G'$ that splits $x$ must isolate the edge $yx$ and only splits $x$ once. 
    Analogously to the proof of Reduction Rule~\ref{rr:deg1} in Section~\ref{sec:poveKernel}, we can therefore obtain a 
    minimum POS-sequence $\phi'$ of $G'$ from a minimum POS-sequence $\phi$ of $G$ by replacing the uniquely defined 
    split of $v$ in $\phi$ with the uniquely defined split of $x$ and vice versa.
  \end{proof}
  
  As in Section~\ref{sec:poveKernel}, we now define a reduction rule that gives an upper bound for the global potential 
  $\mu(G)$. To obtain this upper bound, it suffices to show that a single split operation can decrease the global potential by at most 2.
  \begin{redrule}
    \label{rr:globalPOVS}
    If $\mu(G) > 2k$, reduce to a trivial no-instance.
  \end{redrule}
  \begin{proof}[Proof of Safeness]
    Consider a vertex $v$ of $G$ being split into two new vertices $v_1$ and $v_2$. We show that the global potential 
    decreases by at most 2.
    
    If $\degs(v_1) = 0$, then $\mu(v_1) = 0$ and thus $\mu(v_1) + \mu(v_2) = \mu(v_2) = \mu(v)$. Additionally, all 
    neighbors of $v_1$ are pendant vertices whose potential remains unchanged. Note that the potential of neighbors
    of $v_2$ can only have decreased (by at most 1) if $\deg(v_2) = 1$. Overall, the global potential thus decreases by 
    at most 1 if $\degs(v_1) = 0$.
    
    If $\degs(v_1) = 1$ and $\degs(v_2) = 1$ then $\mu(v) = 0$ and the potential of the two non-pendant neighbors 
    decreases by at most 1 each.
    
    If $\degs(v_1) = 1$ and $\degs(v_2) \geq 2$, then the potential of the 
    non-pendant neighbor of $v_1$ decreases 
    by at most 1 and $\mu(v_1) + \mu(v_2) = \mu(v_2) = \mu(v) - 1$, hence the global potential decreases by at most 2.
    
    Finally, if $\degs(v_1) \geq 2$ and $\degs(v_2) \geq 2$, then the potential of the neighbors of $v$ does not change 
    and $\mu(v_1) + \mu(v_2) = \degs(v_1) - 2 + \degs(v_2) - 2 = \degs(v) - 4 = \mu(v) - 2$, hence the global potential
    decreases by 2. 
    
    Note that all remaining cases are symmetric. Therefore, a single split can decrease the global potential by at most 
    2. Since $\mu(G) > 0$ implies that $G$ contains an $N_2$ substructure, any instance with $\mu(G) > 2k$ is therefore 
    a no-instance by Proposition~\ref{prop:caterpillar}.
  \end{proof}
  
  Because Reduction Rule~\ref{rr:globalPOVS} gives a linear upper bound for the global potential of $G$, we can use 
  Lemma~\ref{lm:kernel} from Section~\ref{sec:poveKernel} to obtain a linear kernel for \povs.
  
  \begin{mytheorem}
    \label{thm:linKernel}
    The problem \POVS admits a kernel of size~$16k$. It can be computed in time $O(m)$.
  \end{mytheorem}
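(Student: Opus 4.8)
The plan is to mirror the proof of Theorem~\ref{thm:quadKernel}, using the linear bound on the global potential from Reduction Rule~\ref{rr:globalPOVS} in place of the quadratic bound from Reduction Rule~\ref{rr:globalPOVE}. After exhaustively applying Reduction Rules~\ref{rr:pendant}--\ref{rr:deg1}, which we have just shown to be safe for \povs, Lemma~\ref{lm:kernel} applies verbatim and gives $|V(G)| \leq 8 \cdot \mu(G)$. Reduction Rule~\ref{rr:globalPOVS} guarantees $\mu(G) \leq 2k$ (otherwise we have reported a trivial no-instance), so $|V(G)| \leq 16k$, and since no reduction rule increases the number of edges, the kernel has size $O(k)$, i.e.\ $16k$.

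For the running time, I would argue exactly as in the proof of Theorem~\ref{thm:quadKernel}. First apply Reduction Rule~\ref{rr:globalPOVS} once in $O(m)$ time by computing $\mu(G)$ directly; since no rule increases the global potential, it never needs to be reapplied. Then apply Reduction Rules~\ref{rr:caterpillar} and~\ref{rr:pseudoCaterpillar} to strip off all (pseudo-)caterpillar components in linear time (check that every vertex in a component has at most two neighbors of degree* at least~2, plus an acyclicity test for the caterpillar case), noting that the footnoted variant of Reduction Rule~\ref{rr:twoEnclosed} splits rather than explodes but does not change the analysis. Next apply Reduction Rules~\ref{rr:nonAdj}, \ref{rr:twoEnclosed}, and~\ref{rr:deg1} in a single pass each, using that the auxiliary graphs $G'$, $\hat{G}$ and the spanning forest $T$ needed for $\Skeep$ and $\Sxpl$ are computable in $O(m)$ time, and that these rules only touch degree*-2 vertices and cannot create new $N_2$ substructures, so the (pseudo-)caterpillar rules need not be revisited. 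Finally apply Reduction Rule~\ref{rr:pendant} in linear time to reduce excess pendant neighbors. Altogether the kernel is obtained in $O(m)$ time.

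The only real point of care is to confirm that the correctness of Lemma~\ref{lm:kernel} is genuinely operation-agnostic: its proof uses only structural consequences of Reduction Rules~\ref{rr:pendant}, \ref{rr:caterpillar}, \ref{rr:twoEnclosed}, and~\ref{rr:deg1} on the reduced graph (no degree*-0 vertices, no 2-enclosed vertices, no $V_2$--$V_1$ adjacency, at most one pendant neighbor per vertex), all of which hold equally after the \povs versions of the rules have been applied. Hence nothing in that counting argument needs to be re-proven, and the theorem follows. I do not anticipate a substantive obstacle here — the work was already done in establishing safeness of Reduction Rules~\ref{rr:pendant}--\ref{rr:deg1} for \povs and in proving Reduction Rule~\ref{rr:globalPOVS}; this theorem is the assembly step.
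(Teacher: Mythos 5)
Your proposal is correct and follows essentially the same route as the paper: apply Reduction Rules~\ref{rr:pendant}--\ref{rr:deg1} (already shown safe for \povs), invoke Lemma~\ref{lm:kernel} with the bound $\mu(G) \le 2k$ from Reduction Rule~\ref{rr:globalPOVS} to get the $16k$ kernel, and reuse the linear-time argument from Theorem~\ref{thm:quadKernel}. Your extra remark that Lemma~\ref{lm:kernel} is operation-agnostic is a correct and welcome clarification, but not a departure from the paper's argument.
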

  \begin{proof}
    After exhaustively applying Reduction Rules~\ref{rr:pendant} -- \ref{rr:deg1}, using Lemma~\ref{lm:kernel} with the 
    upper bound $\mu(G) \leq 2k$ provided by Reduction Rule~\ref{rr:globalPOVS} yields a kernel of size $16k$ for \povs.
    
    To obtain this kernel in linear time, we first apply Reduction Rule~\ref{rr:globalPOVS} once in the beginning. The 
    proof of Theorem~\ref{thm:quadKernel} shows that the remaining Reduction Rules~\ref{rr:pendant} -- \ref{rr:deg1} 
    can be applied exhaustively in time $O(m)$. 
  \end{proof}
  
  \subsection{Branching Algorithm}
  
  We now propose an alternative FPT algorithm for \povs using bounded search trees. We reuse Reduction 
  Rule~\ref{rr:pseudoCaterpillar} to eliminate connected components that are pseudo-caterpillars. Similar to 
  Section~\ref{sec:poveBranch}, our branching rule will remove all $N_2$ substructures from the instance. For the 
  vertex split operation, however, we need to additionally consider the possible ways to split a single vertex.  
  The following lemma helps us limit the number of suitable splits and thus decreases the size of our branching vector.
  
  \begin{mylemma}
    \label{lm:twoSplit}
    For every instance of \povs, there exists a minimum POS-sequence $\phi$ such that every split operation in $\phi$ 
    splits off at most two edges.
  \end{mylemma}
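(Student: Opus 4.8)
The plan is to take an arbitrary minimum POS-sequence, read off its split partition $\tau$ (so $|\tau|=k$) together with the resulting caterpillar forest $H$, and then modify $\tau$ locally so that in the end every vertex of $G$ has \emph{at most one} cell of size at least $3$ in its partition. Once this holds, $\tau$ can be realised by a sequence of $k$ splits, each of which merely peels off one of the small ($\le 2$-edge) cells of some vertex, and the claim follows.

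The core is a single exchange step. Fix a vertex $v$ and suppose $\tau(v)$ has two cells $E_i,E_j$ of size at least $3$; let $v^{(i)},v^{(j)}$ be the corresponding copies of $v$ in $H$. Since $H$ has pathwidth at most $1$, every vertex of $H$ of degree $d\ge 3$ lies on the spine of its caterpillar component and hence has at least $d-2\ge 1$ pendant neighbours. So $v^{(i)}$ has a set $P$ of $|E_i|-2$ pendant neighbours, and for each $p\in P$ the edge $v^{(i)}p$ of $H$ has a unique preimage $va\in E(G)$ with $a\in N_G(v)$. I would move all these preimage edges from cell $E_i$ into cell $E_j$ of $\tau(v)$, leaving every other cell of $\tau(v)$ and every partition $\tau(w)$ with $w\ne v$ untouched. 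The resulting partition $\tau'$ satisfies $|\tau'|=|\tau|$, the cell descended from $E_i$ now has size $2$, and the cell descended from $E_j$ is still large; iterating over all pairs of large cells and all vertices thus produces a partition $\tau^\star$ with $|\tau^\star|=k$ in which every vertex has at most one large cell.

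The step I expect to require the most care is verifying that $\tau'$ is still a POS-partition, i.e., that the graph $H'$ realising $\tau'$ again has pathwidth at most $1$. By construction $H'$ arises from $H$ by detaching the pendant vertices $P$ from $v^{(i)}$ and re-attaching them to $v^{(j)}$, which has degree at least $3$ and is therefore a spine vertex of its component. Relocating leaves among spine vertices of caterpillars preserves being a caterpillar forest: the component of $v^{(i)}$ only loses leaves and the component of $v^{(j)}$ only gains leaves at a spine vertex, and even when $v^{(i)}$ and $v^{(j)}$ lie in different components no two spines are merged. Hence $H'$ is again a caterpillar forest and has pathwidth at most $1$ by Proposition~\ref{prop:caterpillar}.

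It remains to realise $\tau^\star$ economically. For each $v$ with $\tau^\star(v)=\{E_1,\dots,E_t\}$, order the cells so that $E_t$ is the (possibly large) leftover cell and perform $t-1$ splits, the $s$-th of which splits the at-most-two edges of $E_s$ off the current copy of $v$; doing this for all vertices gives a split sequence $\phi$ with $\sum_{v}(|\tau^\star(v)|-1)=|\tau^\star|=k$ splits. Since the graph obtained depends only on the partition $\tau^\star$, $\phi$ is a minimum POS-sequence, and by construction every split in $\phi$ splits off at most two edges.
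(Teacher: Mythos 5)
Your proof is correct and follows essentially the same approach as the paper's: both exploit that in the resulting pathwidth-1 graph each copy of $v$ has at most two non-pendant neighbours, so every cell of size at least $3$ has excess leaf-edges that can be relocated to another (spine) copy of $v$ without destroying the caterpillar structure, leaving at most one large cell per vertex. The only cosmetic difference is that the paper funnels all excess leaf-edges of every cell into one fixed cell of size at least $2$, whereas you perform pairwise exchanges between large cells; both yield the same realisation by splits that peel off at most two edges each.
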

  \begin{proof}
    Consider a minimum POS-partition $\tau$ of $G$. In order to prove the statement of 
    the lemma, we want to show that we can alter $\tau$ such that, for every $v \in S$, $\tau(v)$ contains at most one 
    cell with more than two elements. Let $G'$ denote the graph obtained from $G$ by applying the splits corresponding 
    to $\tau$ to $G$, thus $G'$ has pathwidth at most 1 and contains no cycles or $N_2$ substructures by 
    Proposition~\ref{prop:caterpillar}. For a cell $c \in \tau(v)$, let $v_c$ denote the corresponding vertex of $G'$. 
    Note that $v_c$ can have at most two neighbors of degree 2 or higher, because otherwise, we find an $N_2$ 
    substructure in $G'$, a contradiction. 
    All other neighbors of $v_c$ must therefore be degree-1 vertices. Now fix an arbitrary cell $c \in 
    \tau(v)$ with $|c| \geq 2$ (if no such cell exists, we are already done). For every cell $c' \in \tau(v)$ with $c' 
    \neq c$, we move $\mathrm{max}(0, |c'| - 2)$ edges corresponding to degree-1 vertices in $G'$ from cell $c'$ to 
    cell $c$ in $\tau(v)$ (and therefore from vertex $v_{c'}$ to vertex $v_c$ in $G'$). Since the vertex $v_c$ has 
    degree 2 or higher, adding additional degree-1 neighbors to it does not increase the pathwidth of $G'$ 
    (\cite[Lemma~10]{Philip2010Full}). Similarly, removing the degree-1 vertices from the other vertices also does not 
    increase the pathwidth. 
    This leads to a partition of size $|\tau(v)|$ for the edges 
    incident to $v$ such that at most one cell contains more than two edges. Since this partition can be realized by a 
    sequence splitting off at most two edges per operation, this concludes the proof.
  \end{proof}
  
  In addition to Reduction Rule~\ref{rr:pseudoCaterpillar}, we also reuse Reduction Rule~\ref{rr:pendant} to limit the 
  number of pendant neighbors for each vertex, and Reduction Rule~\ref{rr:globalPOVS} to bound the global 
  potential~$\mu(G)$. These two rules together bound the degree of all vertices in $G$, which lets us state the 
  following branching rule; see Figure~\ref{fig:branching} for an illustration.
  
    \begin{figure}
    \centering
    \begin{subfigure}[b]{0.3\textwidth}
      \centering
      \includegraphics[page=1]{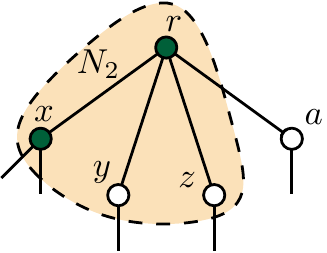}
      \caption{}
    \end{subfigure}
    \hfill
    \begin{subfigure}[b]{0.3\textwidth}
      \centering
      \includegraphics[page=2]{BranchingRule}
      \caption{}
    \end{subfigure}
    \hfill
    \begin{subfigure}[b]{0.3\textwidth}
      \centering
      \includegraphics[page=3]{BranchingRule}
      \caption{}
    \end{subfigure}
    \caption{(a) An $N_2$ substructure consisting of the vertices $\{r, x, y, z\}$. (b)-(c) Two of the branches of 
    Branching Rule~\ref{br:povs} eliminating the $N_2$ substructure. The former splits off edge $rx$ at $x$, the latter 
    splits off the edges $rz$ and $ra$ at $r$.}
    \label{fig:branching}
  \end{figure}

  \begin{branchrule}
    \label{br:povs}
    Let $r$ be the root of an $N_2$ substructure contained in $G$ and let $x$, $y$, and $z$ denote the three neighbors 
    of $r$ in $N_2$. If $\{r,x,y,z\} \cap S = \emptyset$, reduce to a trivial no-instance. Otherwise branch on the 
    following instances:
    \begin{itemize}
      \item For every $v \in \{x, y, z\} \cap S$, create a separate branch for the instance $(G', S', k - 
      1)$, where $G'$ is the graph obtained from $G$ by splitting off the edge $rv$ at $v$.
      \item If $r \in S$: For every subset $X \subseteq N(v)$ with $|X| \leq 2$ and $|X \cap \{x, y, z\}| \geq 
      1$, create a separate branch for the instance $(G', S', k - 1)$, where $G'$ is obtained from $G$ by splitting off 
      the edges corresponding to $X$ at $r$.
    \end{itemize}
    We set $S' \coloneqq S \cup K$ in all branches, where $K$ is the set of vertices created by the split operation.
  \end{branchrule}
  
  \begin{mylemma}
    \label{lm:branchVector}
    Exhaustively applying Reduction Rules~\ref{rr:pendant} and \ref{rr:globalPOVS} and Branching 
    Rule~\ref{br:povs} yields an equivalent instance without $N_2$ substructures in time ${O((6k+12)^k \cdot m)}$.
  \end{mylemma}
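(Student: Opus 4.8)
The plan is to prove three things: that Branching Rule~\ref{br:povs} is exhaustive (so the branching is safe), that after the two reduction rules its branching factor is at most $6k+12$ while each branch drops $k$ by one, and that one round of ``reduce, find an $N_2$, branch'' costs $O(m)$ time; multiplying gives the bound, and the recursion stops precisely at instances without $N_2$ substructures. For safeness I would argue both directions of the branching. The easy direction is that prepending the single split performed in any branch to a POS-sequence of the child yields a POS-sequence of the parent, so no branch turns a no-instance into a yes-instance. For the other direction, start from a minimum POS-partition $\tau$ of a yes-instance and, by Lemma~\ref{lm:twoSplit}, assume each $\tau(v)$ has at most one cell of size larger than two; such a $\tau$ is realised by $k$ splits peeling off at most two edges, and the small cells of a fixed vertex may be peeled in any order. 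Let $G'$ be the graph obtained from $\tau$ and fix the $N_2$ substructure $\{r;x,y,z\}$ chosen by the rule.

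The core of the argument is to trace where the edges $rx, ry, rz$ land in $G'$. If all three are incident to a common copy $r^*$ of $r$, then since $G'$ has pathwidth at most $1$ it contains no $N_2$ substructure (Proposition~\ref{prop:caterpillar}), so one of the copies of $x,y,z$ adjacent to $r^*$, say the copy of $x$, has degree $1$; hence $\{rx\}$ is a cell of $\tau(x)$, so $x\in S$, and peeling off this singleton cell first realises the first-item branch for $v=x$ and leaves a POS-partition of size $k-1$ on $G'$. Otherwise $rx, ry, rz$ occupy at least two cells of $\tau(r)$, so $r\in S$; since at most one cell of $\tau(r)$ is large, some cell meeting $\{rx,ry,rz\}$ is small, and peeling it off first realises the second-item branch for the corresponding set $X\subseteq N(r)$ with $|X|\le 2$ and $X\cap\{x,y,z\}\neq\emptyset$, again leaving a POS-partition of size $k-1$. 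If $\{r,x,y,z\}\cap S=\emptyset$, neither case can occur, so no POS-partition exists and reducing to a trivial no-instance is correct. Thus one branch is always a yes-instance with parameter $k-1$, and the rule is safe.

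For the branching factor, Reduction Rule~\ref{rr:globalPOVS} guarantees $\mu(G)\le 2k$, hence $\degs(r)\le 2k+2$, and Reduction Rule~\ref{rr:pendant} leaves $r$ with at most one pendant neighbour, so $\deg(r)\le 2k+3$. The first item produces at most $3$ branches; the second produces, for each of the at most three vertices of $\{x,y,z\}\cap S$, at most $\deg(r)$ branches (choosing $X$ to contain it together with at most one further neighbour of $r$), i.e.\ at most $3\deg(r)\le 6k+9$ branches. Hence there are at most $6k+12$ branches, each decreasing $k$ by one, so the search tree has at most $(6k+12)^k$ nodes. An $N_2$ substructure (or its absence) is detected in $O(m)$ time, building the graph $G'$ of a branch takes $O(m)$ time, and re-applying Reduction Rules~\ref{rr:pendant} and~\ref{rr:globalPOVS} exhaustively takes $O(m)$ time; summing over all nodes gives the claimed $O((6k+12)^k\cdot m)$ running time, and since the recursion only stops when the current instance contains no $N_2$ substructure (or is a trivial no-instance), the produced instances are as asserted.

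The main obstacle is the case analysis of the second paragraph together with the reordering step: one must argue that a minimum POS-partition, put into the ``at most one large cell per vertex'' normal form of Lemma~\ref{lm:twoSplit}, can always be reorganised so that the split relevant to the chosen $N_2$ substructure --- either the singleton $\{rv\}$ at some $v\in\{x,y,z\}$ or a small cell of $\tau(r)$ meeting $\{rx,ry,rz\}$ --- is executed first, so that the remainder is a POS-partition of size $k-1$ consistent with one of the branches. Everything else, in particular the degree and counting bookkeeping, is routine once Reduction Rules~\ref{rr:pendant} and~\ref{rr:globalPOVS} have been applied.
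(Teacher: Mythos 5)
Your proof is correct and follows essentially the same route as the paper: the $N_2$ substructure can only be destroyed by splitting one of $r,x,y,z$; a split at a vertex of $\{x,y,z\}$ must isolate the edge to $r$; Lemma~\ref{lm:twoSplit} restricts the relevant splits at $r$ to cells of size at most two; and Reduction Rules~\ref{rr:pendant} and~\ref{rr:globalPOVS} bound $\deg(r)$ by $2k+3$, giving $3+3(2k+3)=6k+12$ branches, each decreasing $k$, with $O(m)$ work per node. Your solution-tracing case analysis in the second paragraph is in fact a more explicit version of the paper's terser safeness argument, not a different approach.
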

  \begin{proof}
    Let $T$ denote the $N_2$ substructure induced by $r$ and its neighbors $\{x,y,z\}$. Observe that $T$ can only be 
    removed from the graph by splitting one of the vertices in $\{r,x,y,z\}$. If $T$ contains no vertex of $S$, we 
    consequently have a no-instance.
    
    If we split a vertex $v \in \{x,y,z\}$ in 
    order to eliminate $T$, then we can only split off the edge $rv$ alone, because otherwise, the resulting vertex 
    adjacent to $r$ still has degree 2. We thus only need three branches to enumerate all suitable splits of the 
    vertices 
    in $\{x,y,z\}$.
    
    If we split vertex $r$, then splitting off any subset of $N(r)$ that contains one or two vertices of $\{x,y,z\}$ 
    is suitable to eliminate $T$. However, by Lemma~\ref{lm:twoSplit}, it suffices to consider subsets of $N(r)$ 
    with size at most~2. 
    By Reduction Rule~\ref{rr:globalPOVS}, the global potential $\mu(G)$ is at most $2k$, thus $r$ can have at most $2k 
    + 2$ neighbors of degree 2 or higher. Since Reduction Rule~\ref{rr:pendant} limits the number of degree-1 neighbors 
    of $r$ to at most one, $r$ has degree at most $2k + 3$. We therefore need at most $3 \cdot (2k + 3)$ branches to 
    enumerate all subsets of $N(r)$ of size at most 2 containing at least one vertex of $\{x,y,z\}$. Together with the 
    three branches from earlier, this yields $6k+12$ branches, each of which reduces the parameter by 1.
    
    We have shown earlier that all of our reduction rules can be applied exhaustively in linear time. Finding an $N_2$ 
    substructure in Branching Rule~\ref{br:povs} can also be achieved in $O(m)$ time by checking, for each vertex $v$ 
    in $G$, whether $v$ has at least three neighbors of degree 2 or higher. We thus find an equivalent instance without 
    $N_2$ substructures in time $O((6k+12)^k \cdot m)$.
  \end{proof}
  
  By Lemma~\ref{lm:branchVector}, Branching Rule~\ref{br:povs} eliminates all $N_2$ substructures from the graph. 
  Reduction Rule~\ref{rr:pseudoCaterpillar} additionally removes all pseudo-caterpillars from the graph, we therefore 
  obtain a graph of pathwidth at most 1 by Proposition~\ref{prop:caterpillar}.
  \begin{mytheorem}
    The problem \POVS can be solved in time $O((6k + 12)^k \cdot m)$.
  \end{mytheorem}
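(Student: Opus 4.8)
The plan is to combine the branching analysis already carried out in Lemma~\ref{lm:branchVector} with the characterization of pathwidth-one graphs from Proposition~\ref{prop:caterpillar}. The overall algorithm proceeds in two phases: first, we exhaustively apply Reduction Rules~\ref{rr:pendant} and~\ref{rr:globalPOVS} and Branching Rule~\ref{br:povs}; then, on each resulting leaf of the search tree, we clean up the remaining pseudo-caterpillar components.

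First I would invoke Lemma~\ref{lm:branchVector} directly: it tells us that exhaustively applying Reduction Rules~\ref{rr:pendant} and~\ref{rr:globalPOVS} together with Branching Rule~\ref{br:povs} produces, in time $O((6k+12)^k \cdot m)$, a set of equivalent instances none of which contains an $N_2$ substructure. The soundness of this step rests on the observation (recalled in the proof of Lemma~\ref{lm:branchVector}) that every $N_2$ substructure must be destroyed by splitting one of its four vertices, that splitting a leaf $v \in \{x,y,z\}$ is only useful when the edge $rv$ is split off alone, and that by Lemma~\ref{lm:twoSplit} it suffices to consider splits of the root $r$ that separate off at most two incident edges. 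Since the reduction rules are safe for \povs, every branch preserves the answer, and the branching vector $(1,\dots,1)$ of length $6k+12$ yields the claimed running time.

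Next I would handle the post-processing. After Branching Rule~\ref{br:povs} has been applied exhaustively along some root-to-leaf path, the associated instance $(G, S, k')$ has no $N_2$ substructure, but may still contain connected components that are pseudo-caterpillars; by Proposition~\ref{prop:caterpillar}\ref{it:E}, these are exactly the obstructions that remain. We apply Reduction Rule~\ref{rr:pseudoCaterpillar} exhaustively: each pseudo-caterpillar component whose spine contains a vertex of $S$ is removed at the cost of decrementing the budget (one split on a spine vertex separates two spine edges and yields a caterpillar), while a pseudo-caterpillar component with no vertex of $S$ on its spine forces a no-instance. This step takes linear time, as noted in the proof of Theorem~\ref{thm:quadKernel}. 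Once no pseudo-caterpillars remain and no $N_2$ substructures remain, Proposition~\ref{prop:caterpillar}\ref{it:E} certifies that the graph has pathwidth at most~1, so the instance is a yes-instance if and only if the remaining budget is nonnegative.

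The main obstacle — and it is really the only nontrivial point — is ensuring that the branching in Branching Rule~\ref{br:povs} is both exhaustive (it enumerates enough candidate splits to capture some optimal POS-sequence) and that the branching factor is genuinely bounded by $6k+12$; both are already established via Lemma~\ref{lm:twoSplit} and the degree bound $\deg(r) \le 2k+3$ derived from Reduction Rules~\ref{rr:pendant} and~\ref{rr:globalPOVS}. Combining the $O((6k+12)^k \cdot m)$ cost of the branching phase with the $O(m)$ cost of the pseudo-caterpillar clean-up at each of the $O((6k+12)^k)$ leaves gives the overall bound $O((6k+12)^k \cdot m)$, which proves the theorem.
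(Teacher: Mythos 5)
Your proposal is correct and follows essentially the same route as the paper: apply Lemma~\ref{lm:branchVector} to eliminate all $N_2$ substructures via Branching Rule~\ref{br:povs} together with Reduction Rules~\ref{rr:pendant} and~\ref{rr:globalPOVS}, then use Reduction Rule~\ref{rr:pseudoCaterpillar} to dispose of the remaining pseudo-caterpillar components and conclude via Proposition~\ref{prop:caterpillar}. The only difference is that you spell out the leaf-level clean-up and budget check slightly more explicitly than the paper does, which is harmless.
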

  
  \section{Treewidth-One Vertex Splitting}
  \label{sec:tovs}
  In this section, we consider the variant of \povs where the goal is to obtain a graph of treewidth at most 1, rather 
  than pathwidth at most 1. We remark that a graph $G$ has treewidth at most 1 if and only if $G$ is a forest.
  \begin{problem}
    \problemtitle{\TOVS (\tovs)}
    \probleminput{An undirected graph $G = (V, E)$, a set $S \subseteq V$, and a positive integer~$k$.}
    \problemquestion{Is there a sequence of at most $k$ splits on vertices in $S$ such that the resulting
      graph has treewidth at most 1?}
  \end{problem}

  Note that the variant of \tovs with the deletion operation is exactly the problem \textsc{Feedback Vertex Set}, which 
  is a well-studied NP-complete \cite{Karp72} problem that admits a quadratic kernel~\cite{Thomasse10}. Also note that, in this setting, removing 
  degree-1 vertices from the graph yields an equivalent instance. For this reason, the variant with the explosion 
  operation is also equivalent to \textsc{Feedback Vertex Set}.
  We thus only focus on the problem \tovs, for which we give a simple linear-time algorithm.
  Analogously to POS-sequences and POS-partitions, we define TOS-sequences and TOS-partitions as split sequences and split partitions, respectively, that result in a graph of treewidth at most 1.
  
  \begin{mylemma}
    \label{lm:minTOSSeq}
    Every minimum TOS-sequence of a graph $G$ has size $|E(G)| - |V(G)| + 1$.
  \end{mylemma}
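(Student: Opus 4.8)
The plan is to prove matching lower and upper bounds on the size of a TOS-sequence, working with the equivalent notion of a TOS-partition $\tau$, and recalling that $|\tau| = \sum_{v \in S}(|\tau(v)| - 1)$ counts exactly the number of binary splits that realize $\tau$. Throughout I would treat $G$ as connected (for a disconnected graph the same argument yields $|E(G)| - |V(G)| + c$, where $c$ is the number of connected components), and if $G$ admits no TOS-sequence at all the statement is vacuous, so I may also assume one exists. For the lower bound, the key observation is that a single split adds exactly one vertex and no edges; hence a sequence of $t$ splits turns $G$ into a graph $H$ with $|V(H)| = |V(G)| + t$ and $|E(H)| = |E(G)|$, and if $H$ is a forest then $|E(H)| \le |V(H)| - 1$, which rearranges to $t \ge |E(G)| - |V(G)| + 1$.

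For the upper bound I would use a minimality/exchange argument. Fix a TOS-partition $\tau^{*}$ of minimum size $t^{*}$, let $F$ be the forest it produces, and let $\pi$ denote the map sending each vertex of $F$ to the vertex of $G$ it descends from. Since $F$ has $|V(G)| + t^{*}$ vertices and $|E(G)|$ edges, it has exactly $t^{*} - (|E(G)| - |V(G)|)$ connected components, which is at least $2$ whenever $t^{*} > |E(G)| - |V(G)| + 1$. I would then show that in this case some $v \in V(G)$ has two copies lying in distinct components of $F$: if not, each vertex of $G$ lies, together with all its copies, in a single component $\gamma(v)$, and since every edge of $G$ corresponds under $\pi$ to an edge of $F$ joining a copy of one endpoint to a copy of the other, $\gamma$ would be constant on the connected graph $G$, forcing $F$ to be connected, a contradiction. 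Given such copies $v_{a}$ and $v_{b}$, I would obtain $\tau'$ from $\tau^{*}$ by merging the two cells of $\tau^{*}(v)$ corresponding to $v_{a}$ and $v_{b}$; this is admissible because $v$ was split and hence $v \in S$, and $|\tau'| = t^{*} - 1$. The graph realized by $\tau'$ is $F$ with $v_{a}$ and $v_{b}$ identified, and since these lie in different components of the forest $F$, the identification creates no cycle, so the result is again a forest. Thus $\tau'$ is a TOS-partition of size $t^{*} - 1 < t^{*}$, contradicting minimality, so $t^{*} = |E(G)| - |V(G)| + 1$, which together with the lower bound proves the lemma.

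The step I expect to be the main obstacle is the exchange step: making precise that merging two cells of $\tau^{*}(v)$ still yields a valid split partition on $S$ (only the cell structure at the already-split vertex $v \in S$ changes, and the merged vertex is never split again), and that identifying two copies of the same original vertex lying in different components of a forest cannot introduce a cycle. A minor point worth spelling out is that ``wasteful'' splits producing isolated vertices cause no trouble here, since the connectivity argument that yields the pair $v_{a}, v_{b}$ does not depend on $\tau^{*}$ avoiding them. One could alternatively give a direct greedy construction that breaks one cycle per split using a vertex of $S$, but maintaining the invariant ``every remaining cycle still contains a vertex of $S$'' makes that route more delicate, so I would prefer the exchange argument above.
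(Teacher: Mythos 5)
Your proof is correct and follows essentially the same route as the paper: both argue that a minimum TOS-partition must yield a connected forest (a tree), by merging two cells of some vertex whose copies lie in different components, and then derive the size from the vertex/edge count under splitting. The only difference is that you separate the lower and upper bounds and spell out why a disconnected result forces two copies of a single vertex into distinct components, a detail the paper asserts without proof.
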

  \begin{proof}
    We assume without loss of generality that $G$ is connected.
    Consider a minimum TOS-partition $\tau$ of size $k$ for $G$ and let $G'$ be the graph resulting from $\tau$.
    Assume that $G'$ is disconnected.
    Then there exists a vertex $v$ and two distinct cells $c_1, c_2 \in \tau(v)$, such that the vertices $v_{c_1}$ and $v_{c_2}$ are not connected in $G'$.
    Since $v_{c_1}$ and $v_{c_2}$  are not connected, merging them into a single vertex does not introduce any cycles in $G'$.
    We can thus merge $c_1$ and $c_2$ into a single cell in $\tau(v)$ and we obtain a TOS-partition of size $k - 1$ for $G$, a contradiction to the minimality of $\tau$.
    Therefore, for any minimum TOS-sequence of $G$, the resulting graph~$G'$ must be connected and is thus a tree with $|E(G')| = |V(G')| - 1$.
    Since a single split operation increases the number of vertices by exactly 1 and does not alter the number of edges, it is $|E(G')| =|E(G)|$ and $|V(G')| = |V(G)| +k$ and thus $k = |E(G)| - |V(G)| + 1$.
  \end{proof}

  Note that a graph $G$ with a set $S$ defining its splittable vertices has a TOS-sequence if and only if $G[V(G) \setminus S]$ is acyclic.
  Together with Lemma~\ref{lm:minTOSSeq}, it thus follows that an instance $(G, S, k)$ of \tovs is a yes-instance if and only if $G[V(G) \setminus S]$ is acyclic and $k \geq |E(G)| - |V(G)| + 1$. 
  Since the acyclicity of a graph can be tested in linear time using a simple depth-first search, we obtain the
  following result, which was also independently shown by Firbas~\cite{Firbas2023}.

  \begin{mytheorem}
    The problem \tovs can be solved in time $O(n + m)$.
  \end{mytheorem}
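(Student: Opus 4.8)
The plan is to combine Lemma~\ref{lm:minTOSSeq} with the simple structural observation that the only obstruction to the existence of a TOS-sequence is a cycle that cannot be broken, i.e.\ a cycle consisting entirely of vertices in $V(G) \setminus S$. First I would argue the characterization: an instance $(G, S, k)$ of \tovs is a yes-instance if and only if $G[V(G) \setminus S]$ is acyclic and $k \ge |E(G)| - |V(G)| + 1$. For the forward direction, if $G[V(G) \setminus S]$ contains a cycle $C$, then no split operation on a vertex of $S$ touches any edge of $C$, so $C$ survives in the resulting graph and the result cannot be a forest; hence acyclicity of $G[V(G) \setminus S]$ is necessary, and the bound $k \ge |E(G)| - |V(G)| + 1$ is necessary by Lemma~\ref{lm:minTOSSeq}. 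Note that one must be slightly careful about the quantity $|E(G)| - |V(G)| + 1$: Lemma~\ref{lm:minTOSSeq} is stated for connected $G$, so in general the correct expression is $|E(G)| - |V(G)| + c(G)$ where $c(G)$ is the number of connected components; one can either apply the lemma componentwise or simply reduce to isolated-vertex-free, connected instances as in the preliminaries.

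For the reverse direction I would show that if $G[V(G) \setminus S]$ is acyclic then $G$ has a TOS-sequence at all, and then invoke Lemma~\ref{lm:minTOSSeq} to conclude that the minimum one has size exactly $|E(G)| - |V(G)| + c(G) \le k$. To see that a TOS-sequence exists: repeatedly, while the current graph contains a cycle, that cycle must contain a vertex $v \in S$ (since $G[V(G)\setminus S]$ is acyclic and splitting preserves this — newly created vertices are also in $S$ and the edge sets of components of $G[V(G)\setminus S]$ are unchanged), and splitting off one edge of the cycle incident to $v$ strictly reduces the number of edges lying on cycles; after finitely many steps the graph is a forest. This gives a (not necessarily minimum) TOS-sequence, so a minimum one exists and has the size given by Lemma~\ref{lm:minTOSSeq}.

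Finally I would turn this characterization into a linear-time algorithm: run a depth-first search on $G[V(G) \setminus S]$ to test acyclicity in $O(n + m)$ time, compute $|E(G)|$, $|V(G)|$ and the number of components $c(G)$ in $O(n+m)$ time, and check whether $k \ge |E(G)| - |V(G)| + c(G)$. If both conditions hold, output yes; otherwise output no. This runs in $O(n+m)$ time overall, establishing the theorem.

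I expect no serious obstacle here — the result is essentially immediate from Lemma~\ref{lm:minTOSSeq}. The only points requiring a little care are (i) handling disconnected graphs correctly in the counting formula, and (ii) spelling out why acyclicity of $G[V(G)\setminus S]$ is preserved under splits so that the greedy cycle-breaking argument terminates with a forest; both are routine.
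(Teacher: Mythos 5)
Your proposal is correct and follows essentially the same route as the paper: characterize yes-instances by the acyclicity of $G[V(G)\setminus S]$ together with the size bound from Lemma~\ref{lm:minTOSSeq}, then test both conditions in linear time via depth-first search. Your extra care about disconnected graphs (replacing $|E(G)|-|V(G)|+1$ by $|E(G)|-|V(G)|+c(G)$) and your explicit argument that a TOS-sequence exists when $G[V(G)\setminus S]$ is acyclic are points the paper glosses over but are handled exactly as one would expect.
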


  In fact, Lemma~\ref{lm:minTOSSeq} implies that the problem of determining whether a graph can be turned into a forest
  using at most $k$ splits is equivalent to the problem \textsc{Feedback Edge Set}, which asks whether a given graph can be turned into a forest using at most $k$ edge deletions.
  
  \section{FPT Algorithms for Splitting and Exploding to MSO$_2$-Definable Graph Classes of Bounded Treewidth}
  \label{sec:minor-closed}
  
  While the previous sections focused on the problems of obtaining graphs of pathwidth and treewidth at most 1, respectively, using at most $k$ vertex splits or explosions on the input graph, we now consider the problem of obtaining other graph classes using these operations.
  With the following problems, we generalize the problems from the previous sections.
  
  \begin{problem}
    \problemtitle{\SPLITPI (\splitPi)}
    \probleminput{An undirected graph $G = (V, E)$, a set $S \subseteq V$, and a positive integer~$k$.}
    \problemquestion{Is there a sequence of at most $k$ splits on vertices in $S$ such that the resulting
      graph is contained in $\Pi$?}
  \end{problem}

  \begin{problem}
    \problemtitle{\EXPLODEPI(\explodePi)}
    \probleminput{An undirected graph $G = (V, E)$, a set $S \subseteq V$, and a positive integer~$k$.}
    \problemquestion{Is there a set~$W \subseteq S$ with~$|W| \le k$ such that the graph resulting from exploding all vertices in~$W$ is contained in $\Pi$?}
  \end{problem}
  
  In the following, we show that \splitPi and \explodePi are both FPT parameterized by the solution size $k$, if the graph class $\Pi$ is MSO$_2$-definable and has bounded treewidth. We first consider the split operation because here we can use results from related problems.
  
  \subsection{Vertex Splitting}
  \label{sec:splitMSO}
  Nöllenburg et al. \cite{Nollenburg22} showed that, for any minor-closed graph class $\Pi$, the graph class $\Pi_k$ containing all graphs that can be modified to a graph in $\Pi$ using at most $k$ vertex splits is also minor-closed.
  Robertson and Seymour \cite{Robertson95} showed that every minor-closed graph class has a constant-size set of forbidden minors and that it can be tested in cubic time whether a graph contains a given fixed graph as a minor.
  Since $\Pi_k$ is minor-closed, this implies the existence of a non-uniform FPT-algorithm for the problem \splitPi.
  
  \begin{myproposition}[\cite{Nollenburg22}]
    \label{prop:nonUniform}
    For every minor-closed graph class $\Pi$, the problem \splitPi is non-uniformly FPT parameterized by the solution size $k$.
  \end{myproposition}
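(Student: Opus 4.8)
The plan is to reduce the decision problem, for each fixed value of $k$, to a single graph-membership test and then invoke the Graph Minor Theorem. Following Nöllenburg et al.~\cite{Nollenburg22}, I would consider the class $\Pi_k$ of all graphs that can be turned into a graph of $\Pi$ by at most $k$ vertex splits; as recalled above, $\Pi_k$ is minor-closed whenever $\Pi$ is. To also capture the set $S\subseteq V$ of splittable vertices, I would pass to vertex-labeled graphs, where each vertex is marked \emph{splittable} or \emph{frozen}, and define $\Pi_k^S$ as the class of labeled graphs $(G,S)$ that admit at most $k$ splits, each performed on a splittable vertex, producing a graph of $\Pi$. The accompanying labeled minor relation is set up so that under contraction of a connected branch set the resulting vertex is splittable iff that branch set contained at least one splittable vertex (edge/vertex deletions inherit labels verbatim). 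Then an instance $(G,S,k)$ of \splitPi is a yes-instance exactly if $(G,S)\in\Pi_k^S$, so it suffices to decide membership in $\Pi_k^S$.

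The key step is to show $\Pi_k^S$ is closed under taking labeled minors. Given $(H,S_H)\preceq(G,S)$ witnessed by branch sets $\{B_x\}_{x\in V(H)}$ and a solution $\sigma$ turning $(G,S)$ into some $G^\ast\in\Pi$, I would transport $\sigma$ to $H$: let $\hat B_x$ be the image of $B_x$ in $G^\ast$, contract each connected component of each $\hat B_x$ to obtain $G^{\ast\ast}$, which is a minor of $G^\ast$ and hence lies in $\Pi$. Now $G^{\ast\ast}$ is obtained from $H$ by splitting each $x$ into its $c_x\ge 1$ pieces; since a single split inside a branch set raises its component count by at most one and the branch sets are disjoint, the total number of splits used is at most $|\sigma|\le k$, and a vertex $x$ is split only if $B_x$ met $S$, i.e.\ only if $x$ is splittable in $(H,S_H)$. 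I expect this to be the main obstacle: choosing the label-propagation rule under contraction so that well-quasi-ordering and this downward-closedness hold simultaneously (with the simpler rule "splittable iff the whole branch set is splittable", transport fails exactly when $\sigma$ splits a vertex that gets contracted together with a frozen vertex). When $S=V$ this step is precisely the argument of Nöllenburg et al.~\cite{Nollenburg22}.

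The remaining steps are routine. By the labeled form of the Graph Minor Theorem~\cite{Robertson95}, labeled graphs over the two-element label set are well-quasi-ordered under the labeled minor relation, so the minor-closed class $\Pi_k^S$ has a finite set $\mathcal F_k$ of forbidden labeled minors; $\mathcal F_k$ depends on $k$ and is not known to be computable from $k$, which is the source of the non-uniformity. For each fixed $F\in\mathcal F_k$ one can test in cubic time whether $(G,S)$ contains $F$ as a labeled minor (again via~\cite{Robertson95}, after encoding the finitely many labels into the graph structure), and $(G,S)\in\Pi_k^S$ iff no $F\in\mathcal F_k$ is a labeled minor of $(G,S)$. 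Hence \splitPi, for each fixed $k$, is decided in $|\mathcal F_k|\cdot O(n^3)=f(k)\cdot n^{O(1)}$ time, i.e.\ \splitPi is non-uniformly FPT parameterized by $k$.
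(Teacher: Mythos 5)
Your proposal follows exactly the route the paper takes for Proposition~\ref{prop:nonUniform}: minor-closedness of the class $\Pi_k$ of graphs reachable from $\Pi$ by at most $k$ splits (from Nöllenburg et al.~\cite{Nollenburg22}), combined with Robertson--Seymour's finite obstruction sets and cubic-time minor testing~\cite{Robertson95}, yielding a non-uniform FPT algorithm for each fixed $k$. In fact you are more careful than the paper's one-paragraph justification, which silently ignores the set $S$ of splittable vertices in the input; your labeled-minor formulation with the contraction rule ``splittable iff the branch set meets $S$'' is the correct fix (the naive label-inheritance rules do break downward closure, as you observe), modulo appealing to the labeled form of the Graph Minor Theorem.
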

  
  In the following, we show that the problem \splitPi is uniformly FPT parameterized by $k$ if $\Pi$ is MSO$_2$-definable and has bounded treewidth. Since every minor-closed graph class is MSO$_2$-definable~\cite{Robertson95}, this improves the result from Proposition~\ref{prop:nonUniform} for graph classes of bounded treewidth.

  Eppstein et al.~\cite{Eppstein18} showed that the problem of deciding whether a given graph $G$ can be turned into a graph of class $\Pi$ by splitting each vertex of $G$ at most $k$ times can be expressed as an MSO$_2$ formula on $G$, if $\Pi$ itself is MSO$_2$-definable. Using Courcelle's Theorem~\cite{Courcelle90}, this yields an FPT-algorithm parameterized by $k$ and the treewidth of the input graph. 
  Their algorithm exploits the fact that the split operations create at most $k$ copies of each vertex in the graph.
  Since the same also applies for the problem \splitPi, where we may apply at most $k$ splits overall, their algorithm can be straightforwardly adapted for \splitPi, thereby implying the following result.
  \begin{mycorollary}
    \label{cor:fptKTreewidth}
    For every MSO$_2$-definable graph class $\Pi$, the problem $\splitPi$ is FPT parameterized by the solution size $k$ and the treewidth of the input graph.
  \end{mycorollary}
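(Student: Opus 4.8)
The plan is to follow the strategy of Eppstein et al.~\cite{Eppstein18} and express the existence of a solution for \splitPi as a single MSO$_2$ formula~$\psi_k$ over the input graph~$G$, augmented with a unary relation marking the set~$S$, where $\psi_k$ depends only on~$k$ and on the fixed formula~$\varphi$ defining~$\Pi$. Courcelle's Theorem~\cite{Courcelle90} then decides whether $G \models \psi_k$ in time $f(|\psi_k|, \mathrm{tw}(G)) \cdot n$ for some computable~$f$; since $|\psi_k|$ is bounded by a function of~$k$ (and $\varphi$), this is exactly the claimed FPT algorithm parameterized by $k$ and $\mathrm{tw}(G)$, and it is uniform because $\psi_k$ can be computed from~$k$.

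The first step is to describe how to encode a graph~$G'$ obtained from~$G$ by at most~$k$ splits within the structure~$G$ itself. The key observation, exactly as in~\cite{Eppstein18}, is that a \emph{total} budget of~$k$ splits creates at most~$k+1$ copies of any single vertex, so any solution is fully described by assigning to every vertex~$v$ a partition of the edges incident to~$v$ into at most~$k+1$ parts, each nonempty part corresponding to one copy of~$v$. Such an assignment can be captured in MSO$_2$ by existentially quantifying a bounded number (depending only on~$k$) of edge sets that encode, for each edge and each of its endpoints, which part that endpoint lies in; the partition property, the restriction that only vertices of~$S$ may have more than one part, and the global constraint that the total number of extra parts is at most~$k$ are all MSO$_2$-expressible once~$k$ is fixed. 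Each nonempty part is identified with an arbitrarily chosen incident edge, so the vertices of~$G'$ correspond to a subset of $V(G)$ together with at most~$k$ further elements represented by edges; parts with no incident edge (isolated vertices produced by a split) are handled by additionally tracking a bounded counter, or are avoided by noting that a minimum solution never splits off an isolated vertex unless doing so is forced, in which case a bounded number of isolated vertices is added uniformly to the relativized structure.

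The second step is to relativize~$\varphi$ to this encoding. Quantifiers over edges of~$G'$ translate directly to quantifiers over edges of~$G$, since splitting does not change the edge set; quantifiers over vertices of~$G'$ become quantifiers over pairs consisting of a vertex of~$G$ and one of the boundedly many part indices, which is simulated by a bounded number of ordinary quantifiers over~$G$; and incidence in~$G'$ is expressed by stating that an edge~$e$ is incident to the $i$-th copy of~$v$ precisely when $e$ is incident to~$v$ in~$G$ and~$e$ lies in part~$i$ of~$v$ according to the guessed edge sets. Conjoining the guesses, the structural constraints, and the relativized~$\varphi$ gives~$\psi_k$, and one checks that $G \models \psi_k$ if and only if $(G,S,k)$ is a yes-instance of \splitPi.

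The main obstacle is the bookkeeping in this encoding and relativization: one must ensure that the part index is genuinely bounded by a function of~$k$, deal cleanly with the two endpoints of an undirected edge potentially lying in different parts, and account correctly for isolated copies and for the set~$S$. Once the encoding is set up, the translation of~$\varphi$ is a routine syntactic substitution, and the appeal to Courcelle's Theorem is immediate; since every construction involved is effective, the final algorithm is uniform in the parameter.
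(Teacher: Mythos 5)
Your proposal matches the paper's argument: the paper likewise observes that a global budget of $k$ splits yields at most $k+1$ copies per vertex, so the MSO$_2$ encoding of Eppstein et al.~\cite{Eppstein18} carries over with the additional MSO$_2$-expressible constraints that only vertices of $S$ may be split and that at most $k$ extra parts are created in total, followed by an appeal to Courcelle's Theorem. The endpoint bookkeeping you flag is indeed the one delicate point --- a single global partition of $E$ into $k+1$ classes does not suffice, since an edge may lie in different parts at its two endpoints (consider splitting one vertex of a triangle) --- but it is resolved as in~\cite{Eppstein18}, e.g.\ by guessing, for each of the at most $k$ splits, the split vertex together with the edge set it separates off, which ties each guessed part to a specific endpoint.
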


  For a graph class $\Pi$ of bounded treewidth, recall that $\mathrm{tw}(\Pi)$ denotes the maximum treewidth among all graphs in $\Pi$. 
  With the following lemma, we show that, if the target graph class $\Pi$ has bounded treewidth, then every yes-instance of \splitPi must also have bounded treewidth.
  \begin{myproposition}
    \label{prop:boundedTreewidth}
    For a graph class $\Pi$ of bounded treewidth, let $\mathcal{I} = (G, S, k)$ be an instance of \splitPi . If $\mathrm{tw}(G) > k +  \mathrm{tw}(\Pi)$, then $\mathcal{I}$ is a no-instance.
  \end{myproposition}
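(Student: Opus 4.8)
The plan is to read the split sequence backwards. Each vertex split, reversed, is the identification of two distinct vertices into a single vertex, and identifying two vertices raises the treewidth by at most one. So first I would observe that if $\mathcal{I}$ is a yes-instance, then some sequence of $j \le k$ splits turns $G$ into a graph $G' \in \Pi$, and reversing this sequence exhibits $G$ as the result of performing $j$ vertex identifications on $G'$. Since $G' \in \Pi$ implies $\mathrm{tw}(G') \le \mathrm{tw}(\Pi)$, it then suffices to prove that one vertex identification increases the treewidth by at most one; applying this $j \le k$ times yields $\mathrm{tw}(G) \le k + \mathrm{tw}(\Pi)$, whose contrapositive is exactly the claimed statement. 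Note that the restriction that only vertices of $S$ may be split is irrelevant for this argument, as is the fact that the reversed operations never produce a loop (a split of $v$ only redistributes the edges at $v$ and never adds an edge between the two halves, so in $G'$ these two halves are non-adjacent; and in any case loops and parallel edges do not affect treewidth).

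For the key step I would take a tree decomposition $(T,\{X_t\}_{t\in T})$ of a graph $H^\star$ of width $\mathrm{tw}(H^\star)$ and let $H$ be obtained by identifying two vertices $u,v$ of $H^\star$ into a new vertex $w$. The sets $T_u \coloneqq \{t : u \in X_t\}$ and $T_v \coloneqq \{t : v \in X_t\}$ each induce a nonempty connected subtree of $T$. I would pick $t_u \in T_u$ and $t_v \in T_v$ at minimum distance in $T$, let $P$ be the path from $t_u$ to $t_v$ in $T$, and add $u$ to every bag on $P$; this preserves all tree-decomposition properties (the bags containing $u$ now form $T_u \cup P$, which is still connected since $P$ meets $T_u$ in $t_u$) and enlarges each bag by at most one vertex. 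Afterwards $u$ and $v$ occur together in $X_{t_v}$. Finally I would replace every occurrence of $u$ and of $v$, in all bags, by $w$. Then the bags containing $w$ are exactly $T_u \cup P \cup T_v$, which is connected, and every edge of $H$ incident to $w$ originates from an edge of $H^\star$ incident to $u$ or $v$ and is therefore still covered, while all other vertices and edges are untouched. Since the first step increases each bag by at most one vertex, every resulting bag has at most $\mathrm{tw}(H^\star)+2$ vertices, so the width is at most $\mathrm{tw}(H^\star)+1$, as required.

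I expect the tree-decomposition surgery in the second paragraph to be the only substantive point; everything else is routine bookkeeping and a straightforward induction on $j$. Alternatively one could simply cite the well-known fact that identifying two vertices increases treewidth by at most one, but I would prefer to include the short argument for completeness.
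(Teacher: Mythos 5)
Your proposal is correct and follows essentially the same route as the paper: both reduce the statement to the key lemma that a single split changes the treewidth by at most one, proven by converting a tree decomposition of the post-split graph into one of the pre-split graph, and then iterate over the at most $k$ splits. The only difference is that your tree-decomposition surgery (adding $u$ only along a connecting path before merging $u$ and $v$ into $w$) is more refined than needed; the paper simply deletes $v_1$ and $v_2$ from all bags and adds $v$ to every bag, which already yields the same $+1$ bound.
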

  \begin{proof}
    We first show that a single split operation can reduce the treewidth of $G$ by at most~1.
    Assume, for the sake of contradiction, that we can obtain a graph $G'$ of treewidth less than $\mathrm{tw}(G) - 1$ 
    by splitting a single vertex $v$ of $G$ into vertices $v_1$ and $v_2$ of $G'$. Let~$\mathcal{T}$ denote a minimum tree 
    decomposition of $G'$. Remove all occurences of $v_1$ and $v_2$ in $\mathcal{T}$ and add $v$ to every bag of 
    $\mathcal{T}$. Observe that the result is a tree decomposition of size less than $\mathrm{tw}(G)$ for~$G$, a 
    contradiction.  A single split operation thus decreases the treewidth of the graph by at most 1.
    Since every graph $G' \in \Pi$ has $\mathrm{tw}(G') \leq \mathrm{tw}(\Pi)$, it is thus impossible to obtain a graph of $\Pi$ with at most $k$ vertex splits if $\mathrm{tw}(G) > k +  \mathrm{tw}(\Pi)$.
  \end{proof}

  Given a graph class $\Pi$ of bounded treewidth, we first determine in time $f(k + \mathrm{tw}(\Pi)) \cdot n$ 
  whether the treewidth of $G$ is greater than $k + \mathrm{tw}(\Pi)$~\cite{Bodlaender93}. If 
  this is the case, then we can immediately report a no-instance by Proposition~\ref{prop:boundedTreewidth}.
  Otherwise, we know that $\mathrm{tw}(G) \leq k + \mathrm{tw}(\Pi)$. Since $\mathrm{tw}(\Pi)$ is a constant, we have $\mathrm{tw}(G) \in O(k)$, and thus Corollary~\ref{cor:fptKTreewidth}  yields the following result.

  \begin{mytheorem}
    \label{thm:splitPiFPT}
    For every MSO$_2$-definable graph class $\Pi$ of bounded treewidth, the problem $\splitPi$ is FPT parameterized by the solution size $k$.
  \end{mytheorem}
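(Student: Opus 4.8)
The plan is to combine the two ingredients that have just been established: Corollary~\ref{cor:fptKTreewidth}, which gives an FPT algorithm for \splitPi parameterized jointly by $k$ and the treewidth of the input graph, and Proposition~\ref{prop:boundedTreewidth}, which bounds the treewidth of any yes-instance in terms of $k$ and the constant $\mathrm{tw}(\Pi)$. The idea is that the second result lets us preprocess away every instance whose treewidth is too large, so that on the surviving instances the treewidth is $O(k)$ and the first result applies with $k$ as the sole parameter.

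Concretely, I would proceed as follows. First, given an instance $\mathcal{I} = (G, S, k)$, apply Bodlaender's linear-time algorithm~\cite{Bodlaender93} (for any fixed bound $w$, it decides in time $f(w)\cdot n$ whether $\mathrm{tw}(G) \le w$ and, if so, outputs a tree decomposition of that width) with the threshold $w \coloneqq k + \mathrm{tw}(\Pi)$; note $\mathrm{tw}(\Pi)$ is a fixed constant depending only on the class $\Pi$, so $w$ is a function of $k$ alone. If the algorithm reports $\mathrm{tw}(G) > k + \mathrm{tw}(\Pi)$, then by Proposition~\ref{prop:boundedTreewidth} we may safely answer \emph{no}. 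Otherwise we have a graph with $\mathrm{tw}(G) \le k + \mathrm{tw}(\Pi) = O(k)$, and we feed $\mathcal{I}$ together with this treewidth bound into the algorithm of Corollary~\ref{cor:fptKTreewidth}; since that algorithm runs in time $g(k, \mathrm{tw}(G)) \cdot n^{O(1)}$ and $\mathrm{tw}(G)$ is itself bounded by a function of $k$, the overall running time is $h(k)\cdot n^{O(1)}$ for a computable function $h$, which is exactly what FPT parameterized by $k$ requires.

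I do not expect any real obstacle here: the statement is essentially a bookkeeping composition of the two preceding results, and the only things to check are that (i) the preprocessing step is correct, which is immediate from Proposition~\ref{prop:boundedTreewidth}, and (ii) the resulting running time is still of the form $h(k)\cdot n^{O(1)}$, which follows because composing computable functions of $k$ yields a computable function of $k$. If anything deserves a sentence of care, it is observing that the algorithm of Corollary~\ref{cor:fptKTreewidth} only needs an \emph{upper bound} on $\mathrm{tw}(G)$ (or a tree decomposition of bounded width), both of which Bodlaender's algorithm supplies, so no separate treewidth computation is needed. This completes the proof of Theorem~\ref{thm:splitPiFPT}.
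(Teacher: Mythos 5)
Your proposal is correct and matches the paper's argument exactly: the paper likewise runs Bodlaender's algorithm with threshold $k + \mathrm{tw}(\Pi)$, rejects via Proposition~\ref{prop:boundedTreewidth} if the treewidth exceeds this bound, and otherwise concludes $\mathrm{tw}(G) \in O(k)$ so that Corollary~\ref{cor:fptKTreewidth} applies with $k$ as the sole parameter.
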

  
  \subsection{Vertex Explosion}
  We now turn to the problem variant \explodePi that uses vertex explosions instead of vertex splits.
  Analogously to Section~\ref{sec:splitMSO}, we let $\PiExpl$ denote the graph class containing all graphs that can be modified to a graph in $\Pi$ using at most $k$ vertex explosions.
  For arbitrary minor-closed graph classes $\Pi$, the class $\PiExpl$ is not necessarily minor-closed, as the counterexample in Figure~\ref{fig:ExplosionCounterexample} shows.
  It is therefore not clear whether Proposition~\ref{prop:nonUniform} also holds for \explodePi.
  Note that, in Figure~\ref{fig:ExplosionCounterexample}, splitting off a single edge in $H_1$ yields a graph of $\Pi$.
  The question whether a graph of $\Pi$ can be obtained by applying arbitrarily many vertex splits to at most $k$ vertices in the input graph is therefore not equivalent to \explodePi for arbitrary graph classes $\Pi$.
  
  \begin{figure}[t]
    \centering
    \begin{subfigure}[b]{0.45\textwidth}
      \centering
      \includegraphics[page=1,scale=1]{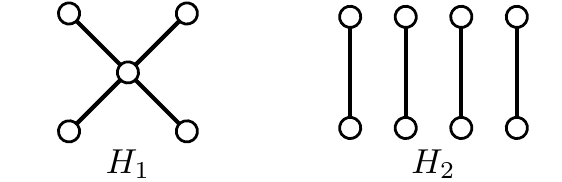}
      \caption{}
    \end{subfigure}
    \hfill
    \begin{subfigure}[b]{0.45\textwidth}
      \centering
      \includegraphics[page=2,scale=1]{ExplosionCounterexample}
      \caption{}
    \end{subfigure}
    \caption{(a) Two forbidden minors $H_1$ and $H_2$ characterizing a minor-closed graph class $\Pi$. (b) A graph $G \notin \Pi$ that can be modified into the graph $G' \in \Pi$ by exploding vertex $x$, thus $G \in \Pi_1^\times$. However, exploding at most one vertex in the graph $H_1$ yields either $H_1$ or $H_2$, thus $H_1 \notin \Pi_1^\times$. Since $H_1$ is a minor of $G$, $\Pi_1^\times$ is therefore not minor-closed.}
    \label{fig:ExplosionCounterexample}
  \end{figure}

  Additionally, the FPT-algorithm for \splitPi derived from Eppstein et al.~\cite{Eppstein18} cannot be straightforwardly adapted for \explodePi, since the number of new vertices resulting from explosions is not bounded by a function in $k$.
  However, we use a similar approach for \explodePi by defining an MSO$_2$ formula on an auxiliary graph, again yielding an FPT algorithm parameterized by the solution size $k$ for MSO$_2$-definable graph classes $\Pi$ of bounded treewidth.
  
  Given an instance $(G, S, k)$ of \explodePi, we first construct the auxiliary graph $G^\times = (V(G) \cup D, E')$ by subdividing each edge of $G$ twice; see Figure~\ref{fig:MSO2Exp}. The vertices of $D$ denote the new subdivision vertices.
  The subdivision vertices adjacent to a vertex $v \in V(G)$ in $G^\times$ represent the vertices that result from exploding $v$; see Figure~\ref{fig:MSO2ExpC}.
  
  \begin{figure}[t]
    \centering
    \begin{subfigure}[b]{0.32\textwidth}
      \centering
      \includegraphics[page=1,scale=1]{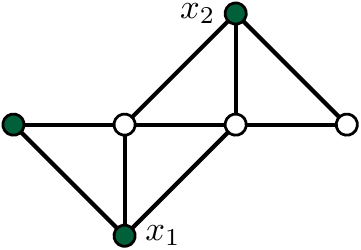}
      \caption{}
      \label{fig:MSO2ExpA}
    \end{subfigure}
    \hfill
    \begin{subfigure}[b]{0.32\textwidth}
      \centering
      \includegraphics[page=2,scale=1]{MSO2Explosion}
      \caption{}
      \label{fig:MSO2ExpB}
    \end{subfigure}
    \hfill
    \begin{subfigure}[b]{0.32\textwidth}
      \centering
      \includegraphics[page=3,scale=1]{MSO2Explosion}
      \caption{}
      \label{fig:MSO2ExpC}
    \end{subfigure}
    \caption{(a) An instance $(G, S, 2)$ of \explodePi. (b) The 
    corresponding auxiliary graph $G^\times$ obtained by subdividing each edge in $G$ twice. (c) The graph obtained by exploding $\{x_1, x_2\}$ in $G$ is the highlighted minor of $G^\times$. Since $\Pi$ is MSO$_2$-definable, one can express \explodePi using an MSO$_2$ formula on $G^\times$.}
    \label{fig:MSO2Exp}
  \end{figure}
  
  Given a set  $W \subseteq S \subseteq V(G)$ representing the vertices of $G$ that are chosen to be exploded, our MSO$_2$ formula on $G^\times$  works as follows.
  The graph that is obtained from $G$ by exploding the vertices of $W$ is exactly the graph $G^\times_W$ obtained from $G^\times$ by removing all vertices of $W$ and by contracting all subdivision vertices of $D$ adjacent to a vertex $v \in V(G) \setminus W$ into $v$; see Figure~\ref{fig:MSO2ExpC} for an example.
  We thus simply need to test whether the minor $G^\times_W$ of $G^\times$ is contained in $\Pi$.
  
  Let $\Pi$ be an MSO$_2$-definable graph class and let $\varphi$ denote the corresponding MSO$_2$-formula such that $G^\times \models \varphi$ if and only if $G^\times$ is contained in $\Pi$. We let $V^\times \coloneqq V(G^\times)$ and $E^\times \coloneqq E(G^\times)$ denote the free variables of $\varphi$ that correspond to the vertices and edges of~$G^\times$, respectively.
  In order to test whether the minor $G^\times_W$ of $G^\times$  is contained in $\Pi$ for a given set~$W$, we now modify $\varphi$ to a formula $\varphi'$, such that $G^\times \models \varphi'(W)$ if and only if $G^\times_W \models \varphi$. In addition to $V^\times$ and $E^\times$, we also use the free variables $V \coloneqq V(G)$, $D$, and $S$ to identify the vertices of $G$ in $G^\times$, the subdivision vertices of $G^\times$, and the splittable vertices of $G$, respectively.
  
  For every predicate of the form ``$v \in V^\times$'' in $\varphi$, we need to ensure that no vertices of $W$ and no subdivision vertices adjacent to vertices of $V \setminus W$ are allowed. We thus replace every predicate ``$v \in V^\times$'' with the following predicate:
  \begin{equation*}
   v \in V^\times_W \coloneqq v \in V^\times \setminus W \land \lnot(\exists u \in V \setminus W: \textsc{adj}^\times(u, v)).
  \end{equation*}
  We note that the predicate $\textsc{adj}^\times(u, v)$ is true if and only if $u$ and $v$ are adjacent in $G^\times$.
  
  Furthermore, we let the edges of $G^\times$ connecting the adjacent subdivision vertices represent the edges of $G^\times_W$ by replacing the predicate ``$e \in E^\times$'' as follows:
  \begin{equation*}
    e \in E^\times_W  \coloneqq \exists v_1, v_2 \in D: v_1 \neq v_2 \land \textsc{inc}^\times(e, v_1) \land \textsc{inc}^\times(e, v_2).
  \end{equation*}
  The formula $\textsc{inc}^\times(e, v)$ is true if and only if edge $e$ is incident to vertex $v$ in $G^\times$.
  
  Finally, we need to redefine the edge-vertex incidence predicate of $\varphi$ to be consistent with our new edge and vertex predicates from above.
  Since the edges of $G^\times_W$ are represented by edges connecting adjacent subdivision vertices in $G^\times$, we simply need to additionally account for the case where the given vertex is adjacent to one of the endpoints of the specified edge. This corresponds to a vertex of $D$ being contracted into an adjacent vertex of $V \setminus W$ as described earlier.
  \begin{equation*}
    \textsc{inc}^\times_W(e, v)  \coloneqq v \in V^\times_W \land e \in E^\times_W \land (\textsc{inc}^\times(e, v) \lor \exists v' \in D: \textsc{adj}^\times(v, v') \land \textsc{inc}^\times(e, v'))
  \end{equation*}
  
  We remark that the formulas described above can be straightforwardly translated to pure MSO$_2$. Using the following formula on the graph $G^\times$, we can now model whether exploding a set $W \subseteq V(G)$ in the original graph $G$ yields a graph of $\Pi$.
  \begin{equation*}
    \textsc{$\Pi$-Explodable}(W) = W \subseteq S \land \varphi'(W)
  \end{equation*}
  Since, for any fixed MSO$_2$-definable graph class $\Pi$, the corresponding formula $\varphi$ (and thus also~$\varphi'$) has constant size, so does the formula \textsc{$\Pi$-Explodable}.
  We can thus determine whether \textsc{$\Pi$-Explodable} is satisfiable for $G^\times$ in $f(\mathrm{tw}(G^\times)) \cdot n$ time using Courcelle's Theorem~\cite{Courcelle90}.
  Using the optimization version of Courcelle's Theorem~\cite{Arnborg91}, we can determine in the same time whether there exists a set $W$ with $|W| \leq k$ that satisfies this formula. Note that subdividing edges does not change the treewidth of a graph, thus $\mathrm{tw}(G^\times) = \mathrm{tw}(G)$. We therefore obtain the following result.
  \begin{mylemma}
    \label{lm:fptTreewidth}
    For every MSO$_2$-definable graph class $\Pi$, the problem $\explodePi$ is FPT parameterized by the treewidth of the input graph.
  \end{mylemma}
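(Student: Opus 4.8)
The plan is to turn the construction sketched above into a full argument. Given an instance $(G,S,k)$ of \explodePi, I would first form the auxiliary graph $G^\times$ by subdividing every edge of $G$ twice, regarded as a labeled structure carrying the distinguished unary predicates $V = V(G)$, $D$ (the subdivision vertices), and $S$. The first step is to prove precisely the correspondence asserted above: for every $W \subseteq S$, the graph obtained from $G$ by exploding the vertices of $W$ is isomorphic to the minor $G^\times_W$ of $G^\times$ obtained by deleting the vertices of $W$ and contracting every vertex of $D$ that is adjacent to a vertex of $V \setminus W$ into that neighbor. This is a short check: a non-exploded vertex $v$ recovers its original incidences through these contractions, whereas for an exploded vertex $w$ the two subdivision vertices flanking each incident edge survive as distinct degree-$1$ vertices, which is exactly the effect of a vertex explosion.

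The second and main step is to verify that the relativized atoms $v \in V^\times_W$, $e \in E^\times_W$, and $\textsc{inc}^\times_W(e,v)$ defined above faithfully describe the incidence structure of $G^\times_W$ inside $G^\times$: its vertices are the non-exploded original vertices together with the subdivision vertices both of whose endpoints are exploded, its edges are in bijection with the matching edges of $G^\times[D]$, and the incidence relation must additionally cover the case where a subdivision vertex has been contracted into an adjacent surviving original vertex. Granting this, I would transform the MSO$_2$ formula $\varphi$ defining $\Pi$ into $\varphi'$ by replacing each vertex-membership, edge-membership, and incidence atom with its relativized counterpart and, crucially, by relativizing every vertex quantifier (both element and set quantifiers) to $V^\times_W$ and every edge quantifier to $E^\times_W$. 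A routine induction on the structure of $\varphi$ then gives $G^\times \models \varphi'(W)$ if and only if $G^\times_W \models \varphi$, i.e.\ if and only if $G^\times_W \in \Pi$; since $\varphi$ has constant size for a fixed $\Pi$, so does $\textsc{$\Pi$-Explodable}(W) \equiv W \subseteq S \land \varphi'(W)$, and all of this stays within pure MSO$_2$.

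Finally, I would invoke the optimization version of Courcelle's theorem \cite{Courcelle90,Arnborg91} to decide, in time $f(\mathrm{tw}(G^\times)) \cdot |V(G^\times)|$, whether some set $W$ with $|W| \le k$ satisfies $\textsc{$\Pi$-Explodable}$. Since double subdivision preserves treewidth, $\mathrm{tw}(G^\times) = \mathrm{tw}(G)$, and $|V(G^\times)| = n + 2m$ is linear in the size of the input, this is an FPT algorithm parameterized by $\mathrm{tw}(G)$, proving the lemma. I expect the only real obstacle to be the bookkeeping in the second step: one must make the three relativized atoms mutually consistent and confirm that relativizing all quantifiers keeps the formula in MSO$_2$; the concluding application of Courcelle's theorem is then immediate.
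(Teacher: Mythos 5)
Your proposal follows the paper's proof essentially verbatim: the doubly-subdivided auxiliary graph $G^\times$, the relativized predicates $V^\times_W$, $E^\times_W$, and $\textsc{inc}^\times_W$, and the optimization version of Courcelle's theorem, with the observation that subdivision preserves treewidth closing the argument. One small correction to your description of $G^\times_W$: after subdividing each edge twice, every subdivision vertex has exactly one neighbor in $V$, and it survives precisely when that neighbor is exploded (becoming the new pendant vertex created by that explosion) --- not only when both endpoints of its edge lie in $W$, and likewise not both flanking subdivision vertices of an edge with only one exploded endpoint survive; with that fixed, your verification plan and the relativization of all quantifiers go through as stated.
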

  
  We now again consider the case where the graph class $\Pi$ has bounded treewidth.
  Note that Proposition~\ref{prop:boundedTreewidth} also holds for \explodePi, as the proof can be applied almost verbatim to vertex explosions.
  For any yes-instance $(G, S, k)$ of \explodePi, we thus have $\mathrm{tw(G) \leq \mathrm{tw(\Pi)} + k}$ and we can report any input graph of higher treewidth as a no-instance.
  Since $\mathrm{tw(\Pi)}$ is a constant, we obtain the following result using Lemma~\ref{lm:fptTreewidth}.

  \begin{mytheorem}
    For every MSO$_2$-definable graph class $\Pi$ of bounded treewidth, the problem $\explodePi$ is FPT parameterized by the solution size $k$.
  \end{mytheorem}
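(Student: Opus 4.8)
The plan is to derive this theorem from Lemma~\ref{lm:fptTreewidth} together with a treewidth bound on yes-instances analogous to Proposition~\ref{prop:boundedTreewidth}. The guiding observation is that when $\Pi$ has bounded treewidth, a bounded number of explosions can only bring the input graph down to bounded treewidth, so $k$ being small already forces $\mathrm{tw}(G)$ to be small; we can then invoke the treewidth-parameterized algorithm of Lemma~\ref{lm:fptTreewidth}.

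Concretely, I would first check that Proposition~\ref{prop:boundedTreewidth} holds for \explodePi as well: a single vertex explosion decreases the treewidth of the graph by at most $1$. This is proved exactly as in the split case---given a tree decomposition of the exploded graph, delete every occurrence of the new degree-$1$ vertices created by exploding $v$ and insert $v$ into every remaining bag; the result is a valid tree decomposition of the original graph whose width increased by at most one. Consequently, if $\mathrm{tw}(G) > k + \mathrm{tw}(\Pi)$, then $(G, S, k)$ is a no-instance.

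The algorithm then proceeds as follows. Given $(G, S, k)$, run Bodlaender's algorithm~\cite{Bodlaender93} to test whether $\mathrm{tw}(G) > k + \mathrm{tw}(\Pi)$; this takes $f(k + \mathrm{tw}(\Pi)) \cdot n$ time since the target bound is $k$ plus a constant. If the bound is exceeded, report a no-instance. Otherwise $\mathrm{tw}(G) \le k + \mathrm{tw}(\Pi) \in O(k)$, and applying Lemma~\ref{lm:fptTreewidth} decides the instance in time $g(\mathrm{tw}(G)) \cdot n = g(O(k)) \cdot n$ for some computable function $g$, which is FPT in the solution size $k$. I expect no genuine obstacle here; the one point that requires care is confirming that the treewidth-reduction argument of Proposition~\ref{prop:boundedTreewidth} transfers verbatim to explosions, which it does because the newly created vertices are pendant and can simply be dropped from every bag.
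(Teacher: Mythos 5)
Your proposal is correct and matches the paper's own argument: it transfers Proposition~\ref{prop:boundedTreewidth} to explosions via the same tree-decomposition modification (drop the new pendant vertices, put $v$ in every bag), rejects inputs with $\mathrm{tw}(G) > k + \mathrm{tw}(\Pi)$ after a Bodlaender test, and otherwise invokes Lemma~\ref{lm:fptTreewidth} with $\mathrm{tw}(G) \in O(k)$. No gaps.
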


  \section{Conclusion}
  In this work, we studied the problems \POVE (\pove) and \POVS (\povs), parameterized by the solution size $k$.
  
  For \pove, we gave an $O(4^k \cdot m)$-time branching algorithm and showed that \pove admits a quadratic kernel 
  that can be computed in linear time. This improves on a recent result by Ahmed et al.~\cite{Ahmed22}, who developed a 
  kernel of size $O(k^6)$ for a more restricted version of the problem.
  
  For \povs, we developed an $O((6k + 12)^k \cdot m)$-time branching algorithm and gave a kernelization algorithm 
  that computes a kernel of size $16k$ in linear time, thus showing that \povs is FPT with respect to the solution size 
  $k$. Interestingly, the branching algorithm for \povs performs significantly worse than its counterpart for \pove, 
  but the kernelization algorithm yields a smaller kernel. This is because, for the \povs problem, the branching 
  algorithm has to additionally consider multiple ways a single vertex can be split. At the same time, however, a 
  single vertex split only eliminates few forbidden substructures, which was a helpful observation to 
  bound the number of vertices in yes-instances for our kernelization.
  
  Finally, we more generally considered the problem of obtaining a graph of a specific graph class $\Pi$ using at most $k$ vertex splits (respectively explosions). For MSO$_2$-definable graph classes $\Pi$ of bounded treewidth, we obtained an FPT algorithm parameterized by the solution size $k$.
  These graph classes include, for example, the outerplanar graphs, the pseudoforests, and the graphs of treewidth (respectively pathwidth) at most~$c$ for some constant $c$.
  
  \bibliography{literature}
    
\end{document}